\documentclass[final,copyright]{eptcs}

\usepackage[all]{xy}

\usepackage{amssymb}

\newtheorem{theorem}{Theorem}
\newtheorem{proposition}[theorem]{Proposition}
\newtheorem{definition}[theorem]{Definition}
\newtheorem{example}[theorem]{Example}
\newtheorem{corollary}[theorem]{Corollary}

\newenvironment{proof}{\begin{rm}
                       \begin{list}{}{\setlength{\leftmargin}{0pt}
                                      \setlength{\rightmargin}{0pt}
                                      \setlength{\labelwidth}{0pt}
                                      \setlength{\itemindent}{0pt}
                                      \setlength{\topsep}{8pt}
                                      \setlength{\parskip}{0pt}
                                      \setlength{\partopsep}{0pt}}
                       \item[]{\bf Proof\ }}%
                      {\end{list}
                       \end{rm}}

\def\qed{\ifmmode
         $\square$
         \else
         {\unskip
          \nobreak
          \hfil
          \penalty50
          \hskip1em
          \null
          \nobreak
          \hfil
          $\square$
          \parfillskip=0pt
          \finalhyphendemerits=0
          \endgraf}
         \fi}

\newcommand{\comment}[1]{\hspace{1em}\mbox{[{\small #1}]}}

\mathcode`<="4268
\mathcode`>="5269
\mathchardef\ls="213C
\mathchardef\gr="213E

\newcommand{\always}{\square}
\newcommand{\eventually}{\lozenge}
\newcommand{\until}{\mathbin{\mathcal{U}}}
\newcommand{\wuntil}{\mathbin{\mathcal{W}}}
\newcommand{\release}{\mathbin{\mathcal{R}}}
\newcommand{\nxt}{\bigcirc}

\newcommand{\AP}{\mathchoice{\mbox{\it AP}}{\mbox{\it AP}}{\mbox{\it \scriptsize AP}}{\mbox{\it \tiny AP}}}
\newcommand{\source}{\mathchoice{\mbox{\rm source}}{\mbox{\rm source}}{\mbox{\rm \scriptsize source}}{\mbox{\rm \tiny source}}}
\newcommand{\target}{\mathchoice{\mbox{\rm target}}{\mbox{\rm target}}{\mbox{\rm \scriptsize target}}{\mbox{\rm \tiny target}}}
\newcommand{\pro}{\mathchoice{\mbox{\rm prob}}{\mbox{\rm prob}}{\mbox{\rm \scriptsize prob}}{\mbox{\rm \tiny prob}}}
\newcommand{\lab}{\mathchoice{\mbox{\rm label}}{\mbox{\rm label}}{\mbox{\rm \scriptsize label}}{\mbox{\rm \tiny label}}}
\newcommand{\trace}{\mathchoice{\mbox{\rm trace}}{\mbox{\rm trace}}{\mbox{\rm \scriptsize trace}}{\mbox{\rm \tiny trace}}}

\newcommand{\Exec}{\mathchoice{\mbox{\rm Exec}}{\mbox{\rm Exec}}{\mbox{\rm \scriptsize Exec}}{\mbox{\rm \tiny Exec}}}
\newcommand{\prog}{\mathchoice{\mbox{\rm prog}}{\mbox{\rm prog}}{\mbox{\rm \scriptsize prog}}{\mbox{\rm \tiny prog}}}
\newcommand{\pref}{\mathchoice{\mbox{\rm pref}}{\mbox{\rm pref}}{\mbox{\rm \scriptsize pref}}{\mbox{\rm \tiny pref}}}
\newcommand{\out}{\mathchoice{\mbox{\rm out}}{\mbox{\rm out}}{\mbox{\rm \scriptsize out}}{\mbox{\rm \tiny out}}}

\newcommand{\Nset}{\mathbb{N}}

\newcommand{\E}{\mathit{E}}
\newcommand{\Emin}{{\mathit{ME}}}

\title{Measuring Progress of Probabilistic LTL Model Checking}
\author{Elise Cormie-Bowins\thanks{Supported by an Ontario Graduate
    Scholarship.}
\hspace{0.2em} and Franck van Breugel\thanks{Supported by the Natural Sciences and Engineering Research Council of Canada and the Leverhulme Trust.}
\institute{DisCoVeri Group, Department of Computer Science, York University,\\
4700 Keele Street, Toronto, ON, M3J 1P3, Canada}}

\begin{document}

\maketitle

\begin{abstract}
Recently, Zhang and Van Breugel introduced the notion of a progress
measure for a probabilistic model checker.  Given a linear-time 
property $\phi$ and a description of the part of the system that has 
already been checked, the progress measure returns
a real number in the unit interval.  The real number captures how much
progress the model checker has made towards verifying $\phi$.  If
the progress is zero, no progress has been made.  If it is one, the
model checker is done.  They showed that the progress measure provides
a lower bound for the measure of the set of execution paths that
satisfy $\phi$.  They also presented an algorithm to compute the 
progress measure when $\phi$ is an invariant.

In this paper, we present an algorithm to compute the progress
measure when $\phi$ is a formula of a positive fragment of linear
temporal logic.  In this fragment, we can express invariants
but also many other interesting properties.  The algorithm is
exponential in the size of $\phi$ and polynomial in the size of 
that part of the system that has already been checked.  We
also present an algorithm to compute a lower bound for the 
progress measure in polynomial time.
\end{abstract}

\section{Introduction}

Due to the infamous state space explosion problem, model
checking a property of source code that contains randomization 
often fails.  In many cases, the probabilistic model checker
simply runs out of memory without reporting any useful
information.  In \cite{ZB11:icalp}, Zhang and Van Breugel
proposed a progress measure for probabilistic model checkers.
This measure captures the amount of progress the model
checker has made with its verification effort.  Even if
the model checker runs out of memory, the amount of progress
may provide useful information.

Our aim is to develop a theory that is applicable to
probabilistic model checkers in general.  Our initial
development has been guided by a probabilistic extension
of the model checker Java PathFinder (JPF) \cite{VHBPL03:ase}.  
This model checker can check properties, expressed in 
linear temporal logic (LTL), of Java code containing 
probabilistic choices.

We model the code under verification as a probabilistic
transition system (PTS), and the systematic search of the
system by the model checker as the set of explored
transitions of the PTS.  We focus on linear-time
properties, in particular those expressed in LTL.  The
progress measure is defined in terms of the set of
explored transitions and the linear-time property
under verification.  The progress measure returns a
real number in the interval $[0, 1]$.  The larger this
number, the more progress the model checker has made with
its verification effort.

Zhang and Van Breugel showed that their progress measure
provides a lower bound for the measure of the set of
execution paths that satisfy the linear-time property under 
verification.  If, for example, the progress is 0.9999,
then the probability that we encounter a violation of
the linear-time property when we run the code is at
most 0.0001.  Hence, despite the fact the model checker
may fail by running out of memory, the verification effort 
may still be a success by providing an acceptable upper bound
on the probability of a violation of the property.

The two main contributions of this paper are
\begin{enumerate}
\item
a characterization of the progress measure for a positive
fragment of LTL.  This fragment
includes invariants, and most examples found in, for example,
\cite[Section~5.1]{BK08} can be expressed in this fragment.
This characterization forms the basis for an algorithm
to compute the progress measure.
\item
a polynomial time algorithm to compute a lower bound for
the progress measure for the positive fragment of LTL.
The lower bound is tight for invariants, that is, this
algorithm computes the progress for invariants.
\end{enumerate}

\section{A Progress Measure}

In this section, we review some of the key notions and results of 
\cite{ZB11:icalp}.  We represent the system to be verified by the
probabilistic model checker as a probabilistic transition system.

\begin{definition}
A {\sl probabilistic transition system} is a tuple
$<S, T, \AP, s_0, \source, \target, \pro, \lab>$ 
consisting of
\begin{itemize}
\item
a countable set $S$ of states,
\item
a countable set $T$ of transitions,
\item
a set $\AP$ of atomic propositions,
\item
an initial state $s_0$,
\item
a function $\source : T \to S$,
\item
a function $\target : T \to S$,
\item
a function $\pro : T \to (0, 1]$, and
\item
a function $\lab : S \to 2^{\AP}$
\end{itemize}
such that
\begin{itemize}
\item
$s_0 \in S$ and
\item
for all $s \in S$, $\sum \{\, \pro(t) \mid \source(t) = s \,\} = 1$.
\end{itemize}
\end{definition}

\begin{example}
\label{example:first}
The probabilistic transition system $\mathcal{S}$ depicted by
$$
\UseComputerModernTips
\xymatrix@R=2ex{
& s_1 \ar[r]^{\frac{1}{2}} \ar@/_0.2pc/[dl]_{\frac{1}{2}} & s_3 \ar@(r,u)[]_{1}\\
s_0 \ar@/_0.2pc/[ur]_{\frac{1}{2}} \ar[dr]_{\frac{1}{2}}\\
& s_2 \ar@(r,u)[]_{1}
}
$$
has three states and six transitions.  In this example, we use the
indices of the source and target to name the transitions.  For example, the
transition from $s_0$ to $s_2$ is named $t_{02}$.  Given this naming
convention, the functions $\source_{\mathcal{S}}$ and $\target_{\mathcal{S}}$
are defined in the obvious way.  For example,
$\source_{\mathcal{S}}(t_{02}) = s_0$ and
$\target_{\mathcal{S}}(t_{02}) = s_2$.
The function $\pro_{\mathcal{S}}$ can be easily extracted from the above
diagram.  For example, $\pro_{\mathcal{S}}(t_{02}) = \frac{1}{2}$.
All states are labelled with the atomic proposition $a$ and the states
$s_1$ and $s_2$ are also labelled with the atomic proposition $b$.
Hence, for example, $\lab_{\mathcal{S}}(s_2) = \{ a, b \}$.
\end{example}

Instead of $<S, T, \AP, s_0, \source, \target, \pro, \lab>$ we usually write
$\mathcal{S}$ and we denote, for example, its set of states by 
$S_{\mathcal{S}}$.  We model the potential executions of the system
under verification as execution paths of the PTS.

\begin{definition}
An {\sl execution path} of a PTS $\mathcal{S}$ is an infinite sequence 
of transitions $t_1 t_2 \ldots$ such that
\begin{itemize}
\item
for all $i \geq 1$, $t_i \in T_{\mathcal{S}}$,
\item
$\source_{\mathcal{S}}(t_1) = {s_0}_{\mathcal{S}}$, and
\item
for all $i \geq 1$, $\target_{\mathcal{S}}(t_i) = \source_{\mathcal{S}}(t_{i+1})$.
\end{itemize}
The set of all execution paths is denoted by $\Exec_{\mathcal{S}}$.
\end{definition}

\begin{example}
Consider the PTS of Example~\ref{example:first}.  For this system,
$t_{02} {t_{22}}^{\omega}$, $t_{01} t_{13} {t_{33}}^{\omega}$, and
$t_{01} t_{10} t_{02} {t_{22}}^{\omega}$ are examples of execution
paths.
\end{example}

To define the progress measure, we use a measurable space of
execution paths.  We assume that the reader is familiar with
the basics of measure theory as can be found in, for example,
\cite{B95}.  Recall that a measurable space consists of a set, a
$\sigma$-algebra and a measure.  In our case, the set is 
$\Exec_{\mathcal{S}}$.  The $\sigma$-algebra $\Sigma_{\mathcal{S}}$
is generated from the basic cylinder sets defined below.  We 
denote the set of finite prefixes of execution paths in 
$\Exec_{\mathcal{S}}$ by $\pref(\Exec_{\mathcal{S}})$.

\begin{definition}
Let $e \in \pref(\Exec_{\mathcal{S}})$.  Its {\sl basic cylinder set} 
$B_{\mathcal{S}}^e$ is defined by
\begin{displaymath}
B_{\mathcal{S}}^e 
= 
\{\, e' \in \Exec_{\mathcal{S}} \mid  e \mbox{ is a prefix of } e' \,\}.
\end{displaymath}
\end{definition}

The measure $\mu_{\mathcal{S}}$ is defined on a basic cylinder set 
$B_{\mathcal{S}}^{t_1 \ldots t_n}$ by
\begin{displaymath}
\mu_{\mathcal{S}}(B_{\mathcal{S}}^{t_1 \ldots t_n}) 
=
\prod_{1 \leq i \leq n} \pro_{\mathcal{S}}(t_i). 
\end{displaymath}
The measurable space 
$<\Exec_{\mathcal{S}}, \Sigma_{\mathcal{S}}, \mu_{\mathcal{S}}>$
is a sequence space as defined, for example, in \cite[Chapter~2]{KLK66}.  

The verification effort of the probabilistic model checker is
represented by its search of the PTS.  The search is captured
by the set of transitions that have been explored during the
search.

\begin{definition}
A {\sl search} of a PTS $\mathcal{S}$ is a finite subset of $T_{\mathcal{S}}$.
\end{definition}

\begin{example}
Consider the PTS of Example~\ref{example:first}.  The sets
$\emptyset$, $\{ t_{01} \}$, $\{ t_{02} \}$, $\{ t_{01}, t_{02} \}$
and $\{ t_{01}, t_{02}, t_{10}, t_{13}, t_{22}, t_{33} \}$ are
examples of searches.
\end{example}

A PTS is said to extend a search if the transitions of the search 
are part of the PTS.  We will
use this notion in the definition of the progress measure.

\begin{definition}
The PTS $\mathcal{S}'$ {\sl extends} the search $T$ of the 
PTS $\mathcal{S}$ if for all $t \in T$,
\begin{itemize}
\item
$t \in T_{\mathcal{S}'}$,
\item
${s_{0}}_{\mathcal{S}}  = {s_{0}}_{\mathcal{S'}}$,
\item
$\source_{\mathcal{S}'}(t) = \source_{\mathcal{S}}(t)$,
\item
$\target_{\mathcal{S}'}(t) = \target_{\mathcal{S}}(t)$,
\item
$\pro_{\mathcal{S}'}(t) = \pro_{\mathcal{S}}(t)$,
\item
$\lab_{\mathcal{S}'}(\source_{\mathcal{S}'}(t)) =   \lab_{\mathcal{S}}(\source_{\mathcal{S}}(t))$, and
\item
$\lab_{\mathcal{S}'}(\target_{\mathcal{S}'}(t)) = \lab_{\mathcal{S}}(\target_{\mathcal{S}}(t))$.
\end{itemize}
\end{definition}

\begin{example}
Consider the PTS of Example~\ref{example:first} and the search
$\{ t_{01}, t_{02} \}$.  The PTS
$$
\UseComputerModernTips
\xymatrix@R=2ex{
& s_1 \ar[dr]^{1}\\
s_0 \ar[ur]^{\frac{1}{2}} \ar[dr]_{\frac{1}{2}} && s_3 \ar@(r,u)[]_{1}\\
& s_2 \ar[ur]_{1}
}
$$
extends the search.
\end{example}

Since the PTSs we will consider in the remainder of this paper all
extend a search $T$ of a PTS $\mathcal{S}$, we write $s_0$ instead
of ${s_0}_{\mathcal{S}}$ to avoid clutter.  
PTSs that extend a particular search give rise to the same set of 
execution paths if we restrict ourselves to those execution paths 
that only consist of transitions explored during the search.

\begin{proposition}
\label{prop:extension-execution}
If the PTS $\mathcal{S}'$ extends the search $T$ of the PTS
$\mathcal{S}$, then
\begin{itemize}
\item[(a)]
$T^* \cap \pref(\Exec_{\mathcal{S}}) = T^* \cap \pref(\Exec_{\mathcal{S}'})$
and 
\item[(b)]
$T^{\omega} \cap \Exec_{\mathcal{S}} = T^{\omega} \cap \Exec_{\mathcal{S}'}$.
\end{itemize}
\end{proposition}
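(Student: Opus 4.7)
The plan is a straightforward unfolding of the definitions: the only data used to test whether a sequence of transitions is a (prefix of an) execution path are the initial state and the $\source$/$\target$ functions evaluated on those transitions. Since $\mathcal{S}'$ extends the search $T$, these coincide for $\mathcal{S}$ and $\mathcal{S}'$ on every transition in $T$.

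For part~(a), I would proceed by induction on the length $n$ of a sequence $t_1 \ldots t_n \in T^*$. The base case $n=0$ is immediate since the empty sequence is a prefix of every execution path in either system. For the inductive step, note that $t_1 \ldots t_n \in \pref(\Exec_{\mathcal{S}})$ iff $t_1 \ldots t_{n-1} \in \pref(\Exec_{\mathcal{S}})$ and $\target_{\mathcal{S}}(t_{n-1}) = \source_{\mathcal{S}}(t_n)$ (with the obvious base condition that $\source_{\mathcal{S}}(t_1) = s_0$ when $n=1$). Since each $t_i$ lies in $T$, the extension clauses give $\source_{\mathcal{S}'}(t_i) = \source_{\mathcal{S}}(t_i)$ and $\target_{\mathcal{S}'}(t_i) = \target_{\mathcal{S}}(t_i)$, and the initial states agree by assumption. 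Combining these with the induction hypothesis yields equivalence of membership in the two sets.

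For part~(b), I would reduce to part~(a) via the standard observation that an infinite sequence $t_1 t_2 \ldots$ of transitions is an execution path exactly when every finite prefix $t_1 \ldots t_n$ is a prefix of some execution path, i.e.\ lies in $\pref(\Exec_{\mathcal{S}})$. If the sequence additionally lies in $T^{\omega}$, then each such prefix lies in $T^*$, so part~(a) applies to each prefix and the equivalence transfers to the whole sequence.

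The only mild obstacle is handling the $n=1$ case cleanly, which requires using that ${s_0}_{\mathcal{S}} = {s_0}_{\mathcal{S}'}$; apart from this bookkeeping the argument is routine. No structural property of $\pro_{\mathcal{S}}$ or $\lab_{\mathcal{S}}$ is needed, which is consistent with the fact that the statement is purely about the graph skeleton restricted to $T$.
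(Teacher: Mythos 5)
Your argument is correct, and in fact the paper states Proposition~\ref{prop:extension-execution} without any proof at all (it is one of the routine facts carried over from the earlier work of Zhang and Van Breugel), so your write-up simply supplies details the authors chose to omit; there is no competing official argument to compare against. The one point you pass over too quickly is the backward direction of your step characterization: from $t_1 \ldots t_{n-1} \in \pref(\Exec_{\mathcal{S}})$ and $\target_{\mathcal{S}}(t_{n-1}) = \source_{\mathcal{S}}(t_n)$ you must still extend $t_1 \ldots t_n$ to an \emph{infinite} execution path, which requires that $\target_{\mathcal{S}}(t_n)$, and every state reachable from it, has at least one outgoing transition. This is guaranteed here because the definition of a PTS demands $\sum \{\, \pro_{\mathcal{S}}(t) \mid \source_{\mathcal{S}}(t) = s \,\} = 1$ with $\pro_{\mathcal{S}}$ taking values in $(0,1]$, so no state can be final --- a convention the paper also makes explicit at the end of Section~2; in a model with final states your ``iff'' would be false. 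With that definitional fact invoked explicitly, your induction for part~(a) is sound (membership of each $t_i$ in $T \subseteq T_{\mathcal{S}} \cap T_{\mathcal{S}'}$ and agreement of $\source$, $\target$ and the initial state are all the extension clauses needed), and your reduction of part~(b) to part~(a) via the observation that an infinite sequence in $T^{\omega}$ is an execution path iff all its finite prefixes lie in $\pref(\Exec)$ is valid, since the defining conditions of an execution path are purely local. Your closing remark that neither $\pro$ nor $\lab$ plays any role is also accurate and worth keeping.
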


PTSs that extend a particular search also assign the same measure
to basic cylinder sets of prefixes of execution paths only consisting 
of transitions explored during the search.

\begin{proposition}
\label{prop:extension-measure}
If the PTS $\mathcal{S}'$ extends the search $T$ of the PTS $\mathcal{S}$, 
then $\mu_{\mathcal{S}}(B_{\mathcal{S}}^e) = \mu_{\mathcal{S}'}(B_{\mathcal{S}'}^e)$
for all $e \in T^* \cap \pref(\Exec_{\mathcal{S}})$.
\end{proposition}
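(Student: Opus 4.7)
The plan is to reduce the equality of measures on basic cylinder sets directly to the definition of $\mu$ as a product of transition probabilities, and then invoke the two clauses of the extension definition that guarantee the relevant transitions and their probabilities coincide.

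First I would write $e = t_1 \ldots t_n$. Since $e \in T^{*}$, each $t_i$ lies in $T$, so the extension hypothesis applies to every $t_i$ and gives $\pro_{\mathcal{S}'}(t_i) = \pro_{\mathcal{S}}(t_i)$. Before using the defining formula for $\mu_{\mathcal{S}'}(B_{\mathcal{S}'}^e)$, though, I need to make sure $e$ is actually a prefix of some execution path of $\mathcal{S}'$, so that $B_{\mathcal{S}'}^e$ is a bona fide basic cylinder set of $\mathcal{S}'$. This is precisely part~(a) of Proposition~\ref{prop:extension-execution}: since $e \in T^{*} \cap \pref(\Exec_{\mathcal{S}})$, also $e \in T^{*} \cap \pref(\Exec_{\mathcal{S}'})$.

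With that in hand, the computation is a one-liner:
\begin{displaymath}
\mu_{\mathcal{S}}(B_{\mathcal{S}}^{e})
= \prod_{1 \leq i \leq n} \pro_{\mathcal{S}}(t_i)
= \prod_{1 \leq i \leq n} \pro_{\mathcal{S}'}(t_i)
= \mu_{\mathcal{S}'}(B_{\mathcal{S}'}^{e}),
\end{displaymath}
where the outer equalities are the definition of the measure on a basic cylinder set and the middle equality uses the extension clause $\pro_{\mathcal{S}'}(t_i) = \pro_{\mathcal{S}}(t_i)$ for each $t_i \in T$.

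There is essentially no obstacle here; the only subtlety worth flagging is the well-definedness issue above, i.e.\ verifying that $e$ qualifies as a prefix of an execution path in the extending system $\mathcal{S}'$ so that the formula for $\mu_{\mathcal{S}'}(B_{\mathcal{S}'}^{e})$ is applicable. Once that is dispatched via Proposition~\ref{prop:extension-execution}(a), the result follows immediately from the transition-by-transition agreement of the probability functions.
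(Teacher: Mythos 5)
Your proof is correct. The paper states Proposition~\ref{prop:extension-measure} without any proof at all, evidently regarding it as routine, and your argument is precisely the intended one: unfold the definition of $\mu$ on a basic cylinder set as the product $\prod_{1 \leq i \leq n} \pro(t_i)$ and apply the extension clause $\pro_{\mathcal{S}'}(t) = \pro_{\mathcal{S}}(t)$ termwise, with your invocation of Proposition~\ref{prop:extension-execution}(a) to ensure $e \in \pref(\Exec_{\mathcal{S}'})$ --- so that $B_{\mathcal{S}'}^{e}$ is a genuine basic cylinder set to which the product formula applies --- being a well-definedness point the paper glosses over entirely.
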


The function $\lab_{\mathcal{S}}$ assigns to each state the
set of atomic propositions that hold in the state.  This
function is extended to (prefixes of) execution paths as
follows.

\begin{definition}
The function $\trace_{\mathcal{S}} : \Exec_{\mathcal{S}} \to (2^{\AP_{\mathcal{S}}})^{\omega}$
is defined by
\begin{displaymath}
\trace_{\mathcal{S}}(t_1 t_2 \ldots) =
\lab_{\mathcal{S}}(\source_{\mathcal{S}}(t_1))
\lab_{\mathcal{S}}(\source_{\mathcal{S}}(t_2)) \ldots
\end{displaymath}
The function $\trace_{\mathcal{S}} : \pref(\Exec_{\mathcal{S}}) \to (2^{\AP_{\mathcal{S}}})^*$
is defined by
\begin{displaymath}
\trace_{\mathcal{S}}(t_1 \ldots t_n) =
\lab_{\mathcal{S}}(\source_{\mathcal{S}}(t_1)) \ldots
\lab_{\mathcal{S}}(\source_{\mathcal{S}}(t_n)) \lab_{\mathcal{S}}(\target_{\mathcal{S}}(t_n))
\end{displaymath}
\end{definition}

\begin{example}
Consider the PTS $\mathcal{S}$ of Example~\ref{example:first}.  
\begin{displaymath}
\begin{array}{rcl}
\trace_{\mathcal{S}}(t_{02} {t_{22}}^{\omega}) 
& = & \{ a \} \{ a, b \}^{\omega}\\
\trace_{\mathcal{S}}(t_{01} t_{13} {t_{33}}^{\omega}) 
& = & \{ a \} \{ a, b \} \{ a \}^{\omega}\\
\trace_{\mathcal{S}}(t_{01} t_{10} t_{02} {t_{22}}^{\omega})
& = & \{ a \} \{ a, b \} \{ a \} \{ a, b \}^{\omega}
\end{array}
\end{displaymath}
\end{example}

For the definition of linear-time property and the satisfaction
relation $\models$ we refer the reader to, for example, 
\cite[Section~3.2]{BK08}.  Based on these notions, we define
when an execution path of a PTS satisfies a linear-time
property.

\begin{definition}
The satisfaction relation $\models_{\mathcal{S}}$ is defined by
\begin{displaymath}
e \models_{\mathcal{S}} \phi \mbox{ if } \trace_{\mathcal{S}}(e)
\models \phi.
\end{displaymath}
\end{definition}

For PTSs that extend a particular search, those execution paths 
that only consist of transitions explored by the search satisfy
the same linear-time properties.

\begin{proposition}
\label{prop:extension-satisfaction}
Let $\phi$ be a linear-time property.  If the PTS $\mathcal{S}'$
extends the search $T$ of the PTS $\mathcal{S}$, then
$e \models_{\mathcal{S}} \phi$ iff $e \models_{\mathcal{S}'} \phi$
for all $e \in T^{\omega} \cap \Exec_{\mathcal{S}}$.
\end{proposition}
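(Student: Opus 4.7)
The plan is to reduce the satisfaction equivalence to an equality of traces. Since $e \models_{\mathcal{S}} \phi$ means $\trace_{\mathcal{S}}(e) \models \phi$ and similarly for $\mathcal{S}'$, it suffices to show that $\trace_{\mathcal{S}}(e) = \trace_{\mathcal{S}'}(e)$ whenever $e \in T^{\omega} \cap \Exec_{\mathcal{S}}$. Note that the definition of $\models$ on linear-time properties is agnostic to which PTS produced the trace, so once the traces coincide the biconditional is immediate.

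First I would invoke Proposition~\ref{prop:extension-execution}(b) to conclude that any $e = t_1 t_2 \ldots \in T^{\omega} \cap \Exec_{\mathcal{S}}$ is also in $\Exec_{\mathcal{S}'}$, so $\trace_{\mathcal{S}'}(e)$ is well-defined. Then I would unfold the definition of $\trace$: the $i$-th letter of $\trace_{\mathcal{S}}(e)$ is $\lab_{\mathcal{S}}(\source_{\mathcal{S}}(t_i))$ and the $i$-th letter of $\trace_{\mathcal{S}'}(e)$ is $\lab_{\mathcal{S}'}(\source_{\mathcal{S}'}(t_i))$. Because each $t_i$ belongs to $T$, the definition of ``extends'' gives $\source_{\mathcal{S}'}(t_i) = \source_{\mathcal{S}}(t_i)$ and $\lab_{\mathcal{S}'}(\source_{\mathcal{S}'}(t_i)) = \lab_{\mathcal{S}}(\source_{\mathcal{S}}(t_i))$, so the two letters agree for every $i$.

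The two traces therefore coincide letter by letter, hence $\trace_{\mathcal{S}}(e) = \trace_{\mathcal{S}'}(e)$, and the equivalence $e \models_{\mathcal{S}} \phi \iff e \models_{\mathcal{S}'} \phi$ follows directly from the definition of $\models_{\mathcal{S}}$.

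There is essentially no hard step: the argument is a book-keeping exercise driven entirely by the clauses in the definition of extension. The only mild subtlety is remembering to apply Proposition~\ref{prop:extension-execution}(b) at the outset so that $\trace_{\mathcal{S}'}(e)$ makes sense, and relying on the fact that the clauses about $\source$ and $\lab$ in the definition of extension are stated exactly for the transitions in $T$ and their source states, which is precisely what is needed to match the traces.
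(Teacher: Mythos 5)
Your proposal is correct and follows essentially the same route as the paper, which also reduces the claim to the trace equality $\trace_{\mathcal{S}}(e) = \trace_{\mathcal{S}'}(e)$ for $e \in T^{\omega} \cap \Exec_{\mathcal{S}}$; you simply spell out the letter-by-letter bookkeeping via the $\source$ and $\lab$ clauses of the extension definition, and the appeal to Proposition~\ref{prop:extension-execution}(b) for well-definedness of $\trace_{\mathcal{S}'}(e)$, which the paper's one-line proof leaves implicit.
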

\begin{proof}
Since $\mathcal{S}'$ extends $T$ of $\mathcal{S}$,
$\trace_{\mathcal{S}}(e) = \trace_{\mathcal{S}'}(e)$ for all 
$e \in T^{\omega} \cap \Exec_{\mathcal{S}}$.
\qed
\end{proof}

Next, we introduce the notion of a progress measure.  Given a 
search of a PTS and a linear-time property, it captures the
amount of progress the search of the probabilistic model checker 
has made towards verifying the linear-time property.

\begin{definition}
Let the PTS $\mathcal{S}'$ extend the search $T$ of PTS $\mathcal{S}$
and let $\phi$ be a linear-time property.
The set $\mathcal{B}_{\mathcal{S}'}^{\phi}(T)$ is defined by
\begin{displaymath}
\mathcal{B}_{\mathcal{S}'}^{\phi}(T)
=
\bigcup \{\, B_{\mathcal{S'}}^e \mid e \in T^*
\wedge \forall e' \in B_{\mathcal{S}'}^e : e' \models_{\mathcal{S}'} \phi \,\}.
\end{displaymath}
\end{definition}

The set $\mathcal{B}_{\mathcal{S}'}^{\phi}(T)$ is the union of those
basic cylinder sets $B_{\mathcal{S'}}^e $ the execution paths of which
satisfy the linear-time property $\phi$.  Hence, $B_{\mathcal{S'}}^e $ 
does not contain any execution paths violating $\phi$. 
The set $\mathcal{B}_{\mathcal{S}'}^{\phi}(T)$ is measurable,
as shown in \cite[Proposition~1]{ZB11:icalp}.  Hence,
the measure $\mu_{\mathcal{S}'}$ assigns it a real number in the 
unit interval.  This number represents the ``size'' of the basic cylinder 
sets that do not contain any violations of~$\phi$.  This number captures 
the amount of progress of the search $T$ verifying $\phi$, 
\emph{provided that} 
the PTS under consideration is $\mathcal{S}'$.  However, we have no 
knowledge of the transitions other than the search.  Therefore, we consider 
all extensions $\mathcal{S}'$ of $T$ and consider the worst case in terms 
of progress.

\begin{definition}
The {\sl progress} of the search $T$ of the PTS $\mathcal{S}$ of 
the linear-time property $\phi$ is defined by
\begin{displaymath}
\prog_{\mathcal{S}}(T, \phi) 
=
\inf \left \{\, \mu_{\mathcal{S}'} \left (\mathcal{B}_{\mathcal{S}'}^{\phi}(T) \right ) \mid \mathcal{S}'
\mbox{ extends } T \mbox{ of } \mathcal{S} \right \}.
\end{displaymath}
\end{definition}

\begin{example}
Consider the PTS $\mathcal{S}$ of Example~\ref{example:first} and
the linear temporal logic formulae $\always a$, 
$\eventually a$,
$\eventually b$ and $\nxt b$.  In the table below, we present
the progress of these properties for a number of searches.
\begin{displaymath}
\begin{array}{|l|r|r|r|r|}
\hline
\mbox{\rm search}    & \always a & \eventually a & \eventually b & \nxt b\\
\hline
\emptyset            & 0         & 1             & 0             & 0\\
\hline
\{ t_{01} \}         & 0         & 1             & \frac{1}{2}   & \frac{1}{2}\\\hline
\{ t_{02} \}         & 0         & 1             & \frac{1}{2}   & \frac{1}{2}\\\hline
\{ t_{01}, t_{02} \} & 0         & 1             & 1             & 1 \\
\hline
\{ t_{01},t_{13}, t_{33} \}  & \frac{1}{4}    & 1           & \frac{1}{2}       & \frac{1}{2} \\
\hline
\{ t_{01},t_{10}, t_{13}, t_{33} \}  & \frac{1}{3}    & 1           & \frac{1}{2}       & \frac{1}{2} \\
\hline
\end{array}
\end{displaymath}
\end{example}

In \cite[Theorem~1]{ZB11:icalp}, Zhang and Van Breugel prove the
following key property of their progress measure.  They show that it is a
lower bound for the probability that the linear-time property holds.

\begin{theorem}
Let $T$ be a search of the PTS $\mathcal{S}$ and let $\phi$ be
a linear-time property.  Then
\begin{displaymath}
\prog_{\mathcal{S}}(T, \phi) 
\leq 
\mu_{\mathcal{S}}(\{\, e \in \Exec_{\mathcal{S}} \mid e \models_{\mathcal{S}}  \phi \,\}).
\end{displaymath}
\end{theorem}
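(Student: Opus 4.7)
The plan is to exploit the fact that $\mathcal{S}$ itself is one of the extensions over which the infimum defining $\prog_{\mathcal{S}}(T,\phi)$ is taken, and then to show that the ``safe'' cylinders making up $\mathcal{B}_{\mathcal{S}}^{\phi}(T)$ contribute only execution paths that satisfy $\phi$.

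First I would check that $\mathcal{S}$ trivially extends the search $T \subseteq T_{\mathcal{S}}$ of itself: every clause in the definition of ``extends'' holds with $\mathcal{S}' = \mathcal{S}$. Consequently,
\begin{displaymath}
\prog_{\mathcal{S}}(T, \phi) \leq \mu_{\mathcal{S}}\bigl(\mathcal{B}_{\mathcal{S}}^{\phi}(T)\bigr).
\end{displaymath}
So it suffices to bound $\mu_{\mathcal{S}}(\mathcal{B}_{\mathcal{S}}^{\phi}(T))$ by the measure of the set of $\phi$-satisfying execution paths.

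Next, I would establish the set-theoretic inclusion
\begin{displaymath}
\mathcal{B}_{\mathcal{S}}^{\phi}(T) \subseteq \{\, e \in \Exec_{\mathcal{S}} \mid e \models_{\mathcal{S}} \phi \,\}.
\end{displaymath}
Let $e' \in \mathcal{B}_{\mathcal{S}}^{\phi}(T)$. By definition, $e'$ lies in some basic cylinder set $B_{\mathcal{S}}^{e}$ with $e \in T^*$ such that every element of $B_{\mathcal{S}}^{e}$ satisfies $\phi$ (under $\models_{\mathcal{S}}$). Since $e' \in B_{\mathcal{S}}^{e}$, we directly obtain $e' \models_{\mathcal{S}} \phi$, which yields the desired inclusion. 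The inclusion together with monotonicity of the measure $\mu_{\mathcal{S}}$ then finishes the argument.

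The only step that requires a bit of care is the appeal to monotonicity: it relies on both sets being measurable. The measurability of $\mathcal{B}_{\mathcal{S}}^{\phi}(T)$ is already guaranteed by the cited \cite[Proposition~1]{ZB11:icalp}, and the measurability of $\{\, e \in \Exec_{\mathcal{S}} \mid e \models_{\mathcal{S}} \phi \,\}$ is a standard fact about linear-time properties under the cylinder $\sigma$-algebra. Apart from this, the proof is essentially a two-line observation combining the definition of $\prog_{\mathcal{S}}$ with the defining property of the cylinders in $\mathcal{B}_{\mathcal{S}}^{\phi}(T)$, so I do not anticipate a significant obstacle.
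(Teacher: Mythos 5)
Your proposal is correct, and it is worth noting that the paper itself contains no proof of this theorem: it is imported verbatim from \cite{ZB11:icalp} (their Theorem~1), so there is no in-paper argument to compare against. Your two steps reconstruct the natural proof. The observation that $\mathcal{S}$ extends its own search $T$ (every clause of the definition of ``extends'' holds trivially with $\mathcal{S}' = \mathcal{S}$) immediately bounds the infimum by $\mu_{\mathcal{S}}(\mathcal{B}_{\mathcal{S}}^{\phi}(T))$, and your cylinder inclusion $\mathcal{B}_{\mathcal{S}}^{\phi}(T) \subseteq \{\, e \in \Exec_{\mathcal{S}} \mid e \models_{\mathcal{S}} \phi \,\}$ is exactly the argument the paper itself uses later for Proposition~\ref{prop:characterization-logical-inclusion}, with $\mathcal{S}_T$ in place of $\mathcal{S}$. (A harmless detail: for $e \in T^*$ that is not a prefix of any execution path, $B_{\mathcal{S}}^e = \emptyset$ and the universally quantified condition holds vacuously, but such cylinders contribute nothing to the union, so the inclusion stands.)

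One caveat deserves flagging. You assert that measurability of $\{\, e \in \Exec_{\mathcal{S}} \mid e \models_{\mathcal{S}} \phi \,\}$ is ``a standard fact about linear-time properties,'' but in the sense of \cite[Section~3.2]{BK08} a linear-time property is an \emph{arbitrary} subset of $(2^{\AP})^{\omega}$, and the preimage under $\trace_{\mathcal{S}}$ of an arbitrary such set need not belong to the $\sigma$-algebra $\Sigma_{\mathcal{S}}$ generated by the basic cylinder sets; measurability is standard only for restricted classes such as $\omega$-regular (in particular LTL-definable) properties. The theorem as stated implicitly presupposes this measurability, since otherwise the right-hand side is undefined, so this is a gap in your justification of the monotonicity step rather than in the structure of your proof: your argument shows that the measurable set $\mathcal{B}_{\mathcal{S}}^{\phi}(T)$ is contained in the satisfaction set, which bounds $\prog_{\mathcal{S}}(T,\phi)$ by the inner measure of that set in general, and by $\mu_{\mathcal{S}}$ of it whenever it is measurable, which is the setting the theorem intends.
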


The setting in this paper is slightly different from the one in
\cite{ZB11:icalp}.  In this paper we assume that PTSs do not have 
final states.  This assumption can be made without loss of any 
generality: simply add a self loop with probability one to
each final state.  

\section{Negation and Violations}

In this section, we consider the relationship between making progress
towards verifying a linear-time property and finding a violation of
its negation.  First, we formalize that a search has not found a 
violation of a linear-time property.

\begin{definition}
\label{def:violation}
The search $T$ of the PTS $\mathcal{S}$ {\sl has not found a violation} 
of the linear-time property $\phi$ if there exists a PTS $\mathcal{S}'$ 
which extends $T$ of $\mathcal{S}$ such that 
$e \models_{\mathcal{S}'} \phi$ for all $e \in \Exec_{\mathcal{S}'}$.
\end{definition}

This definition is slightly stronger than the one given in 
\cite[Definition~7]{ZB11:icalp}.  All results of \cite{ZB11:icalp} remain 
valid for this stronger version.  Next, we prove that if a search has 
made some progress towards verifying a linear-time property $\neg \phi$, then 
that search has also found a violation of $\phi$.

\begin{proposition}
Let $T$ be a search of the PTS $\mathcal{S}$ and let $\phi$ be a  
linear-time property.  If $\prog_{\mathcal{S}}(T,
\neg \phi) \gr 0$ then $T$ has found a violation of $\phi$.
\end{proposition}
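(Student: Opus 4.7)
The natural approach is to argue by contraposition: I will prove that if $T$ has not found a violation of $\phi$, then $\prog_{\mathcal{S}}(T, \neg \phi) = 0$. This direction is attractive because the hypothesis of Definition~\ref{def:violation} directly supplies a concrete witness PTS that can be plugged into the infimum defining the progress measure.

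So, assume $T$ has not found a violation of $\phi$. By Definition~\ref{def:violation}, there exists a PTS $\mathcal{S}'$ extending $T$ of $\mathcal{S}$ such that $e' \models_{\mathcal{S}'} \phi$ for every $e' \in \Exec_{\mathcal{S}'}$. I would use this particular $\mathcal{S}'$ as the test witness in the infimum. Since we are assuming classical negation in the underlying satisfaction relation, no execution path of $\mathcal{S}'$ satisfies $\neg \phi$. Now consider any $e \in T^*$ appearing in the union defining $\mathcal{B}_{\mathcal{S}'}^{\neg \phi}(T)$: either $e$ is not a prefix of any execution path of $\mathcal{S}'$, in which case $B_{\mathcal{S}'}^e = \emptyset$ and it contributes nothing to the union; or $B_{\mathcal{S}'}^e$ is non-empty, in which case it contains some $e' \in \Exec_{\mathcal{S}'}$, and by the choice of $\mathcal{S}'$ this $e'$ satisfies $\phi$, hence not $\neg \phi$, so the side condition $\forall e' \in B_{\mathcal{S}'}^e : e' \models_{\mathcal{S}'} \neg \phi$ fails and $e$ is excluded from the union.

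It follows that $\mathcal{B}_{\mathcal{S}'}^{\neg \phi}(T) = \emptyset$ and therefore $\mu_{\mathcal{S}'}(\mathcal{B}_{\mathcal{S}'}^{\neg \phi}(T)) = 0$. Since the progress measure is the infimum over all extensions and one extension realizes the value $0$, we obtain $\prog_{\mathcal{S}}(T, \neg \phi) \leq 0$. As the measure is non-negative, $\prog_{\mathcal{S}}(T, \neg \phi) = 0$, contradicting the hypothesis $\prog_{\mathcal{S}}(T, \neg \phi) > 0$.

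I do not anticipate a genuine obstacle here: once the right witness PTS is identified, the argument reduces to unfolding definitions. The only subtlety worth flagging explicitly is handling those $e \in T^*$ that are not prefixes of any path in $\Exec_{\mathcal{S}'}$, so that the empty-cylinder case is not confused with a non-trivial contribution. Beyond that, the proof is essentially a one-line application of Definition~\ref{def:violation} together with the definitions of $\mathcal{B}_{\mathcal{S}'}^{\neg\phi}(T)$ and $\prog_{\mathcal{S}}$.
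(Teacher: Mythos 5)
Your proof is correct, and it is essentially the contrapositive of the paper's argument rather than a different argument in substance. The paper proceeds directly: from $\prog_{\mathcal{S}}(T, \neg\phi) \gr 0$ it infers $\mu_{\mathcal{S}'}(\mathcal{B}_{\mathcal{S}'}^{\neg\phi}(T)) \gr 0$ for \emph{every} extension $\mathcal{S}'$, hence $\mathcal{B}_{\mathcal{S}'}^{\neg\phi}(T) \neq \emptyset$, hence some nonempty cylinder $B_{\mathcal{S}'}^{e}$ all of whose paths satisfy $\neg\phi$; so every extension contains a path violating $\phi$, which is precisely the negation of Definition~\ref{def:violation}. You run the same chain in reverse: the single witness extension supplied by Definition~\ref{def:violation} makes $\mathcal{B}_{\mathcal{S}'}^{\neg\phi}(T)$ literally empty, so one term in the infimum is $0$ and hence $\prog_{\mathcal{S}}(T, \neg\phi) = 0$. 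Each direction buys something small: the paper's version must reason about all extensions at once and silently uses that a nonempty union has a nonempty member (the empty cylinders satisfy the guard $\forall e' \in B_{\mathcal{S}'}^{e} : e' \models_{\mathcal{S}'} \neg\phi$ vacuously but contribute nothing), whereas your version needs only the one witness and no measure theory beyond $\mu_{\mathcal{S}'}(\emptyset) = 0$; your explicit treatment of the empty-cylinder case is exactly the point the paper's wording glosses over. One stylistic nit: having announced a proof by contraposition, your closing ``contradicting the hypothesis'' is redundant --- the contrapositive alone already finishes the proof.
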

\begin{proof}
By the definition of $\prog$,
$\mu_{\mathcal{S}'}(\mathcal{B}_{\mathcal{S}'}^{\neg \phi}(T)) \gr 0$
for each PTS $\mathcal{S}'$ which extends  $T$ of $\mathcal{S}$.  Hence, 
$\mathcal{B}_{\mathcal{S}'}^{\neg \phi}(T) \not= \emptyset$.
Therefore, there exists $e \in T^*$ such that
$B_{\mathcal{S}'}^e \not= \emptyset$ and 
$\forall e' \in B_{\mathcal{S}'}^e : e' \models_{\mathcal{S}'} \neg \phi$.
Hence, $e' \not\models \phi$ and $e' \in \Exec_{\mathcal{S}'}$.
Therefore, $T$ has found a violation of $\phi$.
\qed
\end{proof}

The reverse implication does not hold in general, as shown in the
following example.

\begin{example}
\label{example:pts}
Consider the PTS
\begin{displaymath}
\UseComputerModernTips
\xymatrix{
s_0 \ar[r]_{\frac{1}{2}} \ar@(r,u)[]_{\frac{1}{2}} & s_1 \ar@(r,u)[]_{1}}
\end{displaymath}
Assume that the state $s_0$ satisfies the atomic proposition $a$
and the state $s_1$ does not.  Consider the linear-time property
$\always a$ and the search $\{ t_{00} \}$.  Note that 
${t_{00}}^{\omega} \not\models \neg \always a$ and, hence,
$\{ t_{00} \}$ has found a violation of $\neg \always a$.
Also note that $\prog_{\mathcal{S}}(\{ t_{00} \}, \always a)~=~0$.  
\end{example}

We conjecture that the reverse implication does hold for safety properties 
(see, for example, \cite[Definition~3.22]{BK08} for a formal definition 
of safety property).  However, so far we have only been able to 
prove it for invariants.

\begin{proposition}
If the search $T$ of the PTS $\mathcal{S}$ has found a violation of 
the invariant $\phi$ then 
\linebreak
$\prog_{\mathcal{S}}(T, \neg \phi) \gr 0$.
\end{proposition}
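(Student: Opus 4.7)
The plan is to reduce the proposition to producing a single finite $T$-prefix that witnesses the violation, and then to lower-bound $\mu_{\mathcal{S}'}(\mathcal{B}_{\mathcal{S}'}^{\neg\phi}(T))$ by the measure of the cylinder of this prefix uniformly in the extension $\mathcal{S}'$. Write the invariant as $\phi = \always \psi$ for a propositional state formula $\psi$, so that $\neg\phi$ is equivalent to $\eventually \neg\psi$.

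\textbf{Step 1 (the hard step).} I would prove the characterization that $T$ has found a violation of $\always \psi$ if and only if there exists $e = t_1 \ldots t_n \in T^{*} \cap \pref(\Exec_{\mathcal{S}})$ such that some label in $\trace_{\mathcal{S}}(e)$ fails $\psi$. The direction I actually need goes by contrapositive: assuming no such $e$ exists, every state reachable from $s_0$ via $T$-transitions satisfies $\psi$; in particular $\psi$ is satisfiable (witnessed at $s_0$), so I can adjoin a fresh self-looping state $s^{*}$ with a label that satisfies $\psi$ and, for every state $s$ appearing in $T$, add a transition $s \to s^{*}$ whose probability is $1 - \sum \{\,\pro_{\mathcal{S}}(t) \mid t \in T,\, \source_{\mathcal{S}}(t) = s\,\}$. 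The resulting PTS $\mathcal{S}'$ extends $T$ of $\mathcal{S}$, every execution path of $\mathcal{S}'$ stays among states satisfying $\psi$, and hence $T$ has not found a violation.

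\textbf{Step 2 (the easy step).} Given such an $e$, let $\mathcal{S}'$ be an arbitrary extension of $T$ of $\mathcal{S}$. By Proposition~\ref{prop:extension-execution}(a), $e \in T^{*} \cap \pref(\Exec_{\mathcal{S}'})$, and any $e' \in B_{\mathcal{S}'}^e$ has $\trace_{\mathcal{S}'}(e')$ containing the same bad state, so $e' \models_{\mathcal{S}'} \eventually \neg \psi$. Therefore $B_{\mathcal{S}'}^e \subseteq \mathcal{B}_{\mathcal{S}'}^{\neg \phi}(T)$, and Proposition~\ref{prop:extension-measure} gives $\mu_{\mathcal{S}'}(B_{\mathcal{S}'}^e) = \prod_{i=1}^n \pro_{\mathcal{S}}(t_i)$, which is strictly positive since each $\pro_{\mathcal{S}}(t_i) \in (0,1]$. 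This lower bound does not depend on $\mathcal{S}'$, so the infimum defining $\prog_{\mathcal{S}}(T, \neg\phi)$ is at least $\prod_{i=1}^n \pro_{\mathcal{S}}(t_i) \gr 0$.

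The main obstacle is Step 1: Definition~\ref{def:violation} is phrased via a universal quantifier over extensions, and one needs to extract a finite syntactic witness. This is precisely where the invariant shape of $\phi$ enters — the ``good'' states are closed under adding outgoing transitions to a fresh $\psi$-state — and it is exactly the feature that fails for arbitrary linear-time properties, as Example~\ref{example:pts} illustrates for the converse direction. Once the witness prefix is in hand, the rest of the argument is straightforward unfolding of the definition of $\prog$ together with Propositions~\ref{prop:extension-execution} and~\ref{prop:extension-measure}.
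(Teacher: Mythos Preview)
Your argument is correct and tracks the paper's own proof closely: both exhibit a finite prefix in $T^*$ whose cylinder is contained in $\mathcal{B}_{\mathcal{S}'}^{\neg\phi}(T)$ and use its positive, extension-independent measure to lower-bound the infimum defining $\prog_{\mathcal{S}}(T,\neg\phi)$. The paper simply asserts the decomposition $e = e_f\, t\, e_\ell$ with $e_f \in T^*$, $t \in T$ and $a \notin \lab(\source(t))$, whereas your Step~1 supplies exactly the justification for this via the contrapositive construction of a violation-free extension, and your Step~2 makes explicit the uniformity of the bound in $\mathcal{S}'$ that the paper's final ``therefore'' leaves implicit.
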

\begin{proof}
For every PTS $\mathcal{S}'$ that extends $T$,
$e \not\models_{\mathcal{S}'} \always a$ for some $e \in \Exec_{\mathcal{S}'}$.
Hence, $e = e_f t e_{\ell}$ for some 
$e_f \in T^* \cap \pref(\Exec_{\mathcal{S}'})$ and $t \in T$ such that
$a \not\in \lab_{\mathcal{S}'}(\source_{\mathcal{S}'}(t))$.
Therefore, for all $e' \in B_{\mathcal{S}'}^{e_f}$ we have that
$e' \models_{\mathcal{S}'} \neg \always a$ and
$B_{\mathcal{S}'}^{e_f} \not= \emptyset$.
Hence, $\mu_{\mathcal{S}'}(B_{\mathcal{S}'}^{e_f}) \gr 0$ and, 
therefore, $\prog_{\mathcal{S}}(T, \neg \always a) \gr 0$.
\qed
\end{proof}

\section{A Positive Fragment of LTL}

Next, we introduce a positive fragment of linear temporal logic (LTL).
This fragment lacks negation.  In Section~\ref{section:algorithm}
we will show how to compute the progress measure for this fragment.

\begin{definition}
The logic LTL$_+$ is defined by
\begin{displaymath}
\phi ::= \mbox{\rm true}  
\mid \mbox{\rm false}  
\mid a
\mid \phi \wedge \phi
\mid \phi \vee \phi 
\mid \nxt \phi 
\mid \phi_1 \until \phi_2
\mid \phi_1 \release \phi_2
\end{displaymath}
where $a \in \AP$.
\end{definition}

The grammar defining LTL$_+$ is the same as the grammar defining the logic 
PNF introduced in \cite[Definition~5.23]{BK08}, except that the
grammar of LTL$_+$ does not contain $\neg a$. For each LTL formula, there exists
an equivalent PNF formula (see, for example,
\cite[Section~5.1.5]{BK08}).  Such a result, of course, does not hold
for LTL$_+$.  

A property of LTL$_+$ that is key for our development is presented next.

\begin{proposition}
\label{prop:pnf}
For all LTL$_+$ formulae $\phi$ and $\sigma \in (2^{\AP})^*$,
$\sigma \emptyset^{\omega} \models \phi$
iff
$\forall \rho \in (2^{\AP})^{\omega} : \sigma \rho \models \phi$.
\end{proposition}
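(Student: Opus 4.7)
The plan is to reduce the biconditional to a monotonicity property of LTL$_+$ satisfaction with respect to the pointwise inclusion order on infinite traces. Define, for $\sigma_1, \sigma_2 \in (2^{\AP})^{\omega}$, the preorder $\sigma_1 \sqsubseteq \sigma_2$ to mean $\sigma_1(i) \subseteq \sigma_2(i)$ for all $i \geq 0$. The key lemma will be: for every LTL$_+$ formula $\phi$, if $\sigma_1 \sqsubseteq \sigma_2$ and $\sigma_1 \models \phi$, then $\sigma_2 \models \phi$. Given this lemma, the proposition falls out almost immediately: the ``if'' direction follows by simply instantiating $\rho = \emptyset^{\omega}$, and the ``only if'' direction follows from the lemma applied to $\sigma_1 = \sigma\emptyset^{\omega}$ and $\sigma_2 = \sigma\rho$, since the two traces agree on the first $|\sigma|$ letters and $\emptyset \subseteq \rho(i)$ elsewhere.

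The monotonicity lemma is proved by structural induction on $\phi$. The constants \textrm{true} and \textrm{false} cases are immediate. For $\phi = a$, the claim reduces to $a \in \sigma_1(0) \Rightarrow a \in \sigma_2(0)$, which is the definition of $\sqsubseteq$ at position $0$; note that this is exactly where the absence of $\neg a$ from LTL$_+$ is critical, since the analogous step for a literal $\neg a$ would fail. The boolean connectives $\wedge$ and $\vee$ are routine from the induction hypotheses. For $\phi = \nxt \phi'$, observe that $\sigma_1 \sqsubseteq \sigma_2$ implies the suffix relation $\sigma_1[1{:}] \sqsubseteq \sigma_2[1{:}]$, so the induction hypothesis applies. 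For $\until$ and $\release$, note that $\sigma_1 \sqsubseteq \sigma_2$ gives $\sigma_1[i{:}] \sqsubseteq \sigma_2[i{:}]$ for every $i$, and then transport the witnessing (or universally quantified) index from $\sigma_1$ to $\sigma_2$ using the induction hypotheses on $\phi_1$ and $\phi_2$.

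I do not anticipate any substantial obstacle: the proof is a straightforward structural induction once the monotonicity formulation is in place. The only conceptual point worth highlighting is that every clause of the LTL$_+$ grammar is preserved by enlarging the label sets, which is exactly the reason the fragment excludes $\neg a$. If one wished to streamline the writeup, one could alternatively prove the two directions of the proposition directly by induction on $\phi$, treating $\sigma \emptyset^{\omega}$ and arbitrary $\sigma \rho$ as the two sides of each inductive step; but the monotonicity formulation isolates the essential content and will be reusable elsewhere in the paper when reasoning about how LTL$_+$ satisfaction behaves across extensions of a search.
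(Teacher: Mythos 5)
Your proof is correct, but it takes a genuinely different route from the paper. The paper proves the nontrivial direction by a direct structural induction on the statement itself (with the induction hypothesis quantified over $\sigma$), which forces a case split in the $\nxt$, $\until$ and $\always$ cases according to whether the relevant index falls inside $\sigma$ or in the $\emptyset^{\omega}$ tail; moreover, it does not handle $\release$ from its semantics at all, but instead rewrites $\phi_1 \release \phi_2 \equiv (\phi_2 \until (\phi_1 \wedge \phi_2)) \vee \always \phi_2$ using equivalences from Baier--Katoen and then proves a separate $\always$ case. Your monotonicity lemma --- satisfaction of LTL$_+$ formulae is preserved under the pointwise inclusion order $\sqsubseteq$ on traces --- is strictly more general, and it pays off twice: the proposition follows in one line since $\sigma\emptyset^{\omega} \sqsubseteq \sigma\rho$, and every inductive case becomes uniform, with no splitting on positions relative to $|\sigma|$ and with $\release$ handled directly, because its semantic condition ($\forall j$: either $\phi_2$ holds at $j$ or $\phi_1$ holds at some $i \ls j$) mentions the subformulae only positively, so the witnessing and universally quantified indices transport verbatim. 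Your lemma also isolates exactly why the fragment must exclude $\neg a$, which the paper only illustrates by counterexample afterwards. What the paper's version buys in exchange is self-containedness: it never introduces an auxiliary order and reasons only about the two specific traces in the statement, at the cost of a longer case analysis and the detour through derived equivalences. One small point to make explicit in a final writeup: in the $\until$ and $\release$ cases you apply the induction hypothesis at suffixes, so state the lemma with the traces universally quantified (as you implicitly do), and verify the one-step fact that $\sigma_1 \sqsubseteq \sigma_2$ implies $\sigma_1[i\ldots] \sqsubseteq \sigma_2[i\ldots]$ for all $i$.
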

\begin{proof}
We prove two implications.  Let $\phi$ be a LTL$_+$ formula and
let $\sigma \in (2^{\AP})^*$.  Assume that 
\linebreak
$\forall \rho \in (2^{\AP})^{\omega} : \sigma \rho \models \phi$.
Since $\emptyset^{\omega} \in (2^{\AP})^{\omega}$, we can immediately
conclude that $\sigma \emptyset^{\omega} \models \phi$.

The other implication is proved by structural induction on $\phi$.
Let $\sigma \in (2^{\AP})^*$.  We distinguish the following cases.
\begin{itemize}
\item
In case $\phi = \mbox{\rm true}$, clearly 
$\forall \rho \in (2^{\AP})^{\omega} : \sigma \rho \models \phi$ and,
hence, the property is satisfied.
\item
In case $\phi = \mbox{\rm false}$, obviously
$\sigma \emptyset^{\omega} \models \phi$
is not satisfied and, therefore, the property holds.
\item
Let $\phi = a$.  If $\sigma \emptyset^{\omega} \models \phi$,
then $|\sigma| \gr 0$ and $a \in \sigma[0]$ and, hence,
$\forall \rho \in (2^{\AP})^{\omega} : \sigma \rho \models \phi$.
\item
Let $\phi = \phi_1 \wedge \phi_2$.  Assume that
$\sigma \emptyset^{\omega} \models \phi$.  Then
$\sigma \emptyset^{\omega} \models \phi_1$ and
$\sigma \emptyset^{\omega} \models \phi_2$.  By induction,
\linebreak
$\forall \rho \in (2^{\AP})^{\omega} : \sigma \rho \models \phi_1$
and $\forall \rho \in (2^{\AP})^{\omega} : \sigma \rho \models \phi_2$.
Hence, $\forall \rho \in (2^{\AP})^{\omega} : \sigma \rho \models \phi$.
\item
The case $\phi = \phi_1 \vee \phi_2$ is similar to the previous case.
\item
For $\nxt \phi$ we distinguish the following two cases.
Assume $|\sigma| = 0$.  Suppose 
$\sigma \emptyset^{\omega} \models \nxt \phi$.  Then
$\emptyset^{\omega}[1 \ldots] = \emptyset^{\omega} \models \phi$.   By induction,
$\forall \rho \in (2^{\AP})^{\omega} : \rho \models \phi$.
Hence, $\forall \rho \in (2^{\AP})^{\omega} : \rho \models \nxt \phi$.

Assume $|\sigma| \geq 1$.   Suppose 
$\sigma \emptyset^{\omega} \models \nxt \phi$.  Then
$(\sigma \emptyset^{\omega})[1 \ldots] = \sigma[1 \ldots]
\emptyset^{\omega} \models \phi$.  By induction,
\linebreak
$\forall \rho \in (2^{\AP})^{\omega} : \sigma[1 \ldots] \rho \models \phi$.
Since $\sigma[1 \ldots] \rho = (\sigma \rho)[1 \ldots]$, we
have that
$\forall \rho \in (2^{\AP})^{\omega} : \sigma \rho \models \nxt \phi$.


\item

Next, let $\phi = \phi_1 \until \phi_2$.  Assume that
$\sigma \emptyset^{\omega} \models \phi$.  Then there exists some 
$j \geq 0$ such that

\begin{itemize}
\item[(a)]
$(\sigma \emptyset^{\omega})[i \ldots] \models \phi_1$ for all  $0 \leq i \ls j$
and
\item[(b)]
$(\sigma \emptyset^{\omega})[j \ldots] \models \phi_2$ .
\end{itemize}

We distinguish two cases.  Suppose $j \ls |\sigma|$.  From (a) we can
conclude that for all $0 \leq i \ls j$, 
$(\sigma \emptyset^{\omega})[i \ldots] = \sigma[i \ldots] \emptyset^{\omega} \models \phi_1$.
By induction, 
$\forall \rho \in (2^{\AP})^{\omega} : \sigma[i \ldots] \rho \models \phi_1$.
Since $\sigma[i \ldots] \rho = (\sigma \rho)[i \ldots]$, we
have that
$\forall \rho \in (2^{\AP})^{\omega} : (\sigma \rho)[i \ldots] \models \phi_1$.
From (b) we can deduce that 
$(\sigma \emptyset^{\omega})[j \ldots] = \sigma[j \ldots] \emptyset^{\omega} \models \phi_2$.  By induction,
$\forall \rho \in (2^{\AP})^{\omega} : \sigma[j \ldots] \rho \models \phi_2$.
Since $\sigma[j \ldots] \rho = (\sigma \rho)[j \ldots]$, we
have that
$\forall \rho \in (2^{\AP})^{\omega} : (\sigma \rho)[j \ldots] \models \phi_2$.
Combining the above, we get
$\forall \rho \in (2^{\AP})^{\omega} : \sigma \rho \models \phi_1 \until \phi_2$.

Suppose  $j \geq |\sigma|$.   For $0 \leq i \ls |\sigma|$, the argument 
for (a) is the same as above.  For $|\sigma| \leq i \ls j$, (a) simply
says that $\emptyset^{ \omega} \models \phi_1$, which, by induction, implies that 
$\forall \rho \in (2^{\AP})^{\omega} : \rho \models \phi_1$.  Hence,
$\forall \rho \in (2^{\AP})^{\omega} : (\sigma \rho)[i \ldots] \models \phi_1$
for all $0 \leq i \ls j$.
In this case, (b) means $\emptyset^{\omega} \models \phi_2$, which, 
by induction, implies that
$\forall \rho \in (2^{\AP})^{\omega} :  \rho \models \phi_2$.
Hence,
$\forall \rho \in (2^{\AP})^{\omega} : (\sigma \rho)[j \ldots] \models \phi_2$.
Combining the above, we obtain that
$\forall \rho \in (2^{\AP})^{\omega} : \sigma \rho \models \phi_1 \until \phi_2$.


\item

Finally, we consider $\phi_1 \release \phi_2 $.  According to \cite[page~256]{BK08}, 
$\phi_1 \release \phi_2 \equiv \neg(\neg \phi_1 \until \neg \phi_2)$
and
\linebreak
$\neg (\phi_1 \until \phi_2) \equiv (\neg \phi_2) \wuntil (\neg \phi_1 \wedge \neg \phi_2)$.
According to \cite[page~252]{BK08}, 
$\phi_1 \wuntil \phi_2 \equiv (\phi_1 \until \phi_2) \vee \always \phi_1$.
Hence, we can derive that
$\phi_1 \release \phi_2 \equiv (\phi_2 \until (\phi_1 \wedge \phi_2))
\vee \always \phi_2$.  Therefore,  proving that the property is
satisfied by $\always \phi$, combined with the proofs for $\wedge$,
$\vee$ and $\until$ above, suffices as proof for $\phi_1 \release \phi_2$.

Thus, we consider $\always \phi$.  Suppose that
$\sigma \emptyset^{\omega} \models \always \phi$.   
Then $(\sigma \emptyset^{\omega})[j \ldots] \models \phi$
for all $j \geq 0$.  We distinguish two cases.
For all $0 \leq j \ls |\sigma|$, we have that
$(\sigma \emptyset^{\omega})[j \ldots] = \sigma[j \ldots]
\emptyset^{\omega} \models \phi$.
By induction, 
$\forall \rho \in (2^{\AP})^{\omega} : \sigma[j \ldots] \rho \models \phi$
and, hence,
$\forall \rho \in (2^{\AP})^{\omega} : (\sigma \rho)[j \ldots] \models \phi$.

For all $j \geq |\sigma|$, we have that
$(\sigma \emptyset^{\omega})[j \ldots] = \emptyset^{\omega} \models \phi$.
By induction, 
$\forall \rho \in (2^{\AP})^{\omega} : \rho \models \phi$ and,
therefore,
$\forall \rho \in (2^{\AP})^{\omega} : (\sigma \rho)[j \ldots] \models
\phi$.
Combining the above, we get
$\forall \rho \in (2^{\AP})^{\omega} : \sigma \rho \models \always \phi$.


\qed
\end{itemize}
\end{proof}

The above result does not hold for all LTL formulae, as shown in the
following example.

\begin{example}
Consider the LTL formula $\neg a$.  Note that this formula is not
equivalent to any LTL$_+$ formula.  Let $\sigma = \epsilon$.  Obviously,
$\emptyset^{\omega} \models \neg a$, but it is not the case that
$\forall  \rho \in (2^{\AP})^{\omega} : \rho \models \neg a$ (just
take a $\rho \in (2^{\AP})^{\omega}$ with $a \in \rho[0]$).
\end{example}

\section{An Algorithm to Compute Progress}
\label{section:algorithm}

To obtain an algorithm to compute the progress for the positive
fragment of LTL, we present an alternative characterization of
the progress measure.  This alternative characterization is
cast in terms of a PTS built from the search as follows.
We start from the transitions of the search and their source
and target states.  We add a sink state, which has a transition 
to itself with probability one and which does not satisfy any
atomic proposition.  For each state which has not been fully 
explored yet, that is, the sum of the probabilities of its 
outgoing transitions is less than one, we add a transition 
to the sink state with the remaining probability.  This PTS
can be viewed as the minimal extension of the search (we will
formalize this in Proposition~\ref{prop:characterization-cylinder}).
The PTS is defined as follows.

\begin{definition}
Let $T$ be a search of the PTS $\mathcal{S}$.
The set $S_{\mathcal{S}}^{T}$ is defined by
\begin{displaymath}
S_{\mathcal{S}}^{T}
=
\{\, \source_{\mathcal{S}}(t) \mid t \in T \,\} \cup
\{\, \target_{\mathcal{S}}(t) \mid t \in T \,\} \cup
\{ s_0 \}.
\end{displaymath}
For each $s \in S_{\mathcal{S}}^{T}$,
\begin{displaymath}
\out_{\mathcal{S}}(s) =
\sum \{\, \pro_{\mathcal{S}}(t) \mid t \in T \wedge \source_{\mathcal{S}}(t) = s \,\}.
\end{displaymath} 
The PTS $\mathcal{S}_T$ is defined by
\begin{itemize}
\item 
$S_{\mathcal{S}_T} = 
S^{T}_{\mathcal{S}} \cup \{ s_{\perp} \}$,
\item
$T_{\mathcal{S}_T} = 
T \cup \{\, t_s \mid s \in S^{T}_{\mathcal{S}} \wedge \out_{\mathcal{S}}(s) \ls 1 \,\} \cup \{ t_{\perp} \}$,
\item 
$\source_{\mathcal{S}_T}(t) =
\left \{
\begin{array}{ll}
\source_{\mathcal{S}}(t) & \mbox{if $t \in T$}\\
s & \mbox{if $t = t_s$}\\
s_{\perp} & \mbox{if $t = t_{\perp}$} 
\end{array}
\right .
$
\item 
$\target_{\mathcal{S}_T}(t) =
\left \{
\begin{array}{ll}
\target_{\mathcal{S}}(t) & \mbox{if $t \in T$}\\
s_{\perp} & \mbox{if $t = t_{\perp}$ or $t = t_s$}
\end{array}
\right .
$
\item 
$\pro_{\mathcal{S}_T}(t) =
\left \{
\begin{array}{ll}
\pro_{\mathcal{S}}(t) & \mbox{if $t \in T$}\\
1 - \out_{\mathcal{S}}(s) & \mbox{if $t = t_s$}\\
1 & \mbox{if $t = t_{\perp}$} 
\end{array}
\right .
$
\item
$\lab_{\mathcal{S}_T}(s) =
\left \{
\begin{array}{ll}
\emptyset & \mbox{if $s = s_{\perp}$}\\
\lab_{\mathcal{S}}(s) & \mbox{otherwise}
\end{array}
\right .
$
\end{itemize}
\end{definition}

The above definition is very similar to \cite[Definition~10]{ZB11:icalp}.
The main difference is that we do not have final states.

\begin{proposition}
Let $T$ be a search of the PTS $\mathcal{S}$.  Then the PTS
$\mathcal{S}_T$ extends $T$.
\end{proposition}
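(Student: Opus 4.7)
The plan is to verify, one at a time, the conditions in the definition of \emph{extends}. Before doing so, I would first confirm that $\mathcal{S}_T$ is actually a PTS, since this is not trivially guaranteed by the construction. The only nontrivial axiom is that for each state the outgoing probabilities sum to one. I would split the argument into three cases on $s \in S_{\mathcal{S}_T}$: (i) $s = s_{\perp}$, where the only outgoing transition is $t_{\perp}$ with probability $1$; (ii) $s \in S_{\mathcal{S}}^T$ with $\out_{\mathcal{S}}(s) = 1$, in which case no auxiliary transition $t_s$ is added and the transitions of $T$ emanating from $s$ already sum to one; and (iii) $s \in S_{\mathcal{S}}^T$ with $\out_{\mathcal{S}}(s) \ls 1$, in which case the added transition $t_s$ contributes precisely the missing mass $1 - \out_{\mathcal{S}}(s)$.

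Next, I would fix an arbitrary $t \in T$ and check the seven conditions. That $t \in T_{\mathcal{S}_T}$ is immediate from the definition $T_{\mathcal{S}_T} = T \cup \ldots$. The equality $\source_{\mathcal{S}_T}(t) = \source_{\mathcal{S}}(t)$, $\target_{\mathcal{S}_T}(t) = \target_{\mathcal{S}}(t)$, and $\pro_{\mathcal{S}_T}(t) = \pro_{\mathcal{S}}(t)$ all follow from the ``$t \in T$'' branch in the corresponding case definitions. For the two label equations, observe that $\source_{\mathcal{S}}(t), \target_{\mathcal{S}}(t) \in S_{\mathcal{S}}^T$ by the definition of $S_{\mathcal{S}}^T$, so these states are distinct from $s_{\perp}$; thus the ``otherwise'' branch in the definition of $\lab_{\mathcal{S}_T}$ applies, giving $\lab_{\mathcal{S}_T} = \lab_{\mathcal{S}}$ on both. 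Finally, the equality of initial states is inherited since $s_0 \in S_{\mathcal{S}}^T \subseteq S_{\mathcal{S}_T}$, and by convention we use the same $s_0$.

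I do not expect any real obstacles here: the entire proof is a case analysis over the piecewise definitions of $\source_{\mathcal{S}_T}$, $\target_{\mathcal{S}_T}$, $\pro_{\mathcal{S}_T}$, and $\lab_{\mathcal{S}_T}$, together with the arithmetic check $\out_{\mathcal{S}}(s) + (1 - \out_{\mathcal{S}}(s)) = 1$ for partially explored states. The only subtlety worth flagging explicitly is ensuring that the sink label $\lab_{\mathcal{S}_T}(s_{\perp}) = \emptyset$ does not interfere with the label-preservation requirement, which is avoided precisely because the source and target of every $t \in T$ lie in $S_{\mathcal{S}}^T$ and therefore cannot coincide with $s_{\perp}$.
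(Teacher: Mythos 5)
Your proof is correct and takes essentially the same route as the paper, which simply states that the result ``follows immediately from the definition of $\mathcal{S}_T$''; your case analysis over the piecewise definitions spells out exactly that definitional check. Your additional verification that $\mathcal{S}_T$ is a well-formed PTS (the outgoing probabilities at each state summing to one, including the $1 - \out_{\mathcal{S}}(s)$ mass on $t_s$) is a worthwhile point of rigor the paper leaves implicit, and your observation that $s_{\perp}$ is disjoint from $S_{\mathcal{S}}^{T}$ correctly disposes of the only potential interference with label preservation.
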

\begin{proof}
Follows immediately from the definition of $\mathcal{S}_T$.
\qed
\end{proof}

Next, we will show that the PTS $\mathcal{S}_T$ is the minimal
extension of the search $T$ of the PTS $\mathcal{S}$.  More
precisely, we will prove that for any other extension
$\mathcal{S}'$ of $T$ we have that
$\mu_{\mathcal{S}_T}(\mathcal{B}_{\mathcal{S}_T}^{\phi})
\leq
\mu_{\mathcal{S}'}(\mathcal{B}_{\mathcal{S}'}^{\phi})$.
To prove this result, we introduce two new notions and some
of their properties.

\begin{definition}
Let $T$ be a search of the PTS $\mathcal{S}$ and let $\phi$ be a 
linear-time property.
The set $\E_{\mathcal{S}}^{\phi}(T)$ is defined by
\begin{displaymath}
\E_{\mathcal{S}}^{\phi}(T)
=
\{\, e \in T^* \cap \pref(\Exec_{\mathcal{S}})
\mid \forall e' \in B_{\mathcal{S}}^e : e' \models_{\mathcal{S}} \phi \,\}.
\end{displaymath}
\end{definition}

The set $\E_{\mathcal{S}_T}^{\phi}(T)$ is minimal among
the $\E_{\mathcal{S}'}^{\phi}(T)$ where $\mathcal{S}'$
extends $T$.

\begin{proposition}
\label{prop:minimal}
Let the PTS $\mathcal{S}'$ extend the search $T$ of the PTS
$\mathcal{S}$.  For any LTL$_+$ formula $\phi$,
\linebreak
$\E_{\mathcal{S}_T}^{\phi}(T) \subseteq \E_{\mathcal{S}'}^{\phi}(T)$.
\end{proposition}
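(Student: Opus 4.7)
The plan is to fix $e \in \E_{\mathcal{S}_T}^{\phi}(T)$ and verify the two membership conditions defining $\E_{\mathcal{S}'}^{\phi}(T)$. The condition $e \in T^* \cap \pref(\Exec_{\mathcal{S}'})$ is immediate from Proposition~\ref{prop:extension-execution}(a), applied first to $\mathcal{S}_T$ extending $T$ and then to $\mathcal{S}'$ extending $T$. The substantive task is to show $e' \models_{\mathcal{S}'} \phi$ for every $e' \in B_{\mathcal{S}'}^{e}$, and I would split on whether the tail of $e'$ beyond $e$ ever leaves $T$.

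If $e' \in T^{\omega}$, then Proposition~\ref{prop:extension-execution}(b) places $e'$ in $\Exec_{\mathcal{S}_T}$, hence in $B_{\mathcal{S}_T}^{e}$; the hypothesis gives $e' \models_{\mathcal{S}_T} \phi$, and Proposition~\ref{prop:extension-satisfaction} transfers this to $e' \models_{\mathcal{S}'} \phi$. Otherwise, write $e' = t_1 t_2 \ldots$ and let $m$ be the largest index with $t_1, \ldots, t_m \in T$; note $m \geq |e|$ and $t_{m+1} \notin T$. Set $s = \target_{\mathcal{S}}(t_m)$ (or $s = s_0$ if $m = 0$). Since $\mathcal{S}'$ extends $T$, the $T$-transitions out of $s$ contribute exactly $\out_{\mathcal{S}}(s)$ to the total outgoing probability at $s$ in $\mathcal{S}'$, which must equal one; the presence of the extra transition $t_{m+1}$ therefore forces $\out_{\mathcal{S}}(s) < 1$. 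By the construction of $\mathcal{S}_T$, there is a transition $t_s$ from $s$ to the sink $s_\perp$.

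I would then form the surrogate path $\hat{e}' = t_1 \ldots t_m \, t_s \, t_\perp^{\omega}$, which lies in $B_{\mathcal{S}_T}^{e}$, so the hypothesis yields $\hat{e}' \models_{\mathcal{S}_T} \phi$. Let $\sigma$ denote the finite common trace prefix $\lab(\source(t_1)) \ldots \lab(\source(t_m)) \lab(s)$ (reducing to $\lab(s_0)$ when $m = 0$). The labelling clauses in the definition of extension ensure that this sequence agrees in $\mathcal{S}$, $\mathcal{S}_T$, and $\mathcal{S}'$, so $\trace_{\mathcal{S}_T}(\hat{e}') = \sigma \emptyset^{\omega}$ (every state visited after $s$ along $\hat{e}'$ is $s_\perp$, which carries the empty label) while $\trace_{\mathcal{S}'}(e') = \sigma \rho$ for some $\rho \in (2^{\AP})^{\omega}$. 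Because $\phi \in \text{LTL}_{+}$, Proposition~\ref{prop:pnf} converts $\sigma \emptyset^{\omega} \models \phi$ into $\sigma \rho \models \phi$, which is exactly $e' \models_{\mathcal{S}'} \phi$.

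The main obstacle is the probability accounting that justifies the sink transition $t_s$: one must see that any transition of $\mathcal{S}'$ from $s$ lying outside $T$ certifies $\out_{\mathcal{S}}(s) < 1$, so that $\mathcal{S}_T$ really does provide a route from $s$ to $s_\perp$ and the surrogate path $\hat{e}'$ exists. Once this is in hand, the rest is a clean matching of trace prefixes and a direct appeal to the LTL$_{+}$ truncation property supplied by Proposition~\ref{prop:pnf}.
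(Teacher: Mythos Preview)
The paper states Proposition~\ref{prop:minimal} without proof, so there is nothing to compare your argument against. Your proof is correct and is exactly the argument one expects here: the case split on whether $e'$ stays in $T^{\omega}$ is natural, the probability-accounting step (a non-$T$ transition out of $s$ in $\mathcal{S}'$ has strictly positive probability, so $\out_{\mathcal{S}}(s) < 1$ and $t_s$ exists in $\mathcal{S}_T$) is sound, and the decisive use of Proposition~\ref{prop:pnf} to pass from $\sigma\emptyset^{\omega} \models \phi$ to $\sigma\rho \models \phi$ is precisely where the LTL$_+$ restriction earns its keep.
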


Next, we restrict our attention to those elements of
$\E_{\mathcal{S}}^{\phi}(T)$ which are minimal
with respect to the prefix order.

\begin{definition}
Let $T$ be a search of the PTS $\mathcal{S}$ and let $\phi$ be a 
linear-time property.
The set $\Emin_{\mathcal{S}}^{\phi}(T)$ is defined by
\begin{displaymath}
\Emin_{\mathcal{S}}^{\phi}(T)
=
\{\, e \in \E_{\mathcal{S}}^{\phi}(T)
\mid |e| \gr 0 \Rightarrow \exists e' \in B_{\mathcal{S}}^{e[|e|-1]} : e' \not\models_{\mathcal{S}} \phi \,\}.
\end{displaymath}
\end{definition}

Note that $e \in \Emin_{\mathcal{S}}^{\phi}(T)$ if and only if
it belongs to $\E_{\mathcal{S}}^{\phi}(T)$ and none of its
prefixes belong to $\E_{\mathcal{S}}^{\phi}(T)$.  

\begin{proposition}
\label{prop:cylinder}
Let the PTSs $\mathcal{S}'$ and $\mathcal{S}''$ extend the search $T$
of the PTS $\mathcal{S}$ and let $\phi$ be a linear-time property.  Then
\begin{displaymath}
\bigcup_{\bar{e} \in \Emin_{\mathcal{S}''}^{\phi}(T)} B_{\mathcal{S}'}^{\bar{e}}
=
\bigcup_{e \in \E_{\mathcal{S}''}^{\phi}(T)} B_{\mathcal{S}'}^e.
\end{displaymath}
\end{proposition}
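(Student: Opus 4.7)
The inclusion $\subseteq$ is immediate: since every element of $\Emin_{\mathcal{S}''}^{\phi}(T)$ also belongs to $\E_{\mathcal{S}''}^{\phi}(T)$, the union on the left ranges over a subset of the indices of the union on the right.

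For the reverse inclusion $\supseteq$, the plan is to show that for every $e \in \E_{\mathcal{S}''}^{\phi}(T)$ there exists a prefix $\bar{e}$ of $e$ with $\bar{e} \in \Emin_{\mathcal{S}''}^{\phi}(T)$, which will suffice because $\bar{e}$ being a prefix of $e$ entails $B_{\mathcal{S}'}^e \subseteq B_{\mathcal{S}'}^{\bar{e}}$. Given such an $e$, I would pick $\bar{e}$ to be a prefix of $e$ of minimal length that still lies in $\E_{\mathcal{S}''}^{\phi}(T)$; this is well defined because the lengths of prefixes in $\E_{\mathcal{S}''}^{\phi}(T)$ form a nonempty subset of $\mathbb{N}$ (it contains $|e|$).

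It remains to verify the defining condition of $\Emin_{\mathcal{S}''}^{\phi}(T)$ for this $\bar{e}$. If $|\bar{e}| = 0$ the condition is vacuous. Otherwise, let $\bar{e}'$ denote the prefix of $\bar{e}$ of length $|\bar{e}| - 1$. By Proposition~\ref{prop:extension-execution}(a) applied to $\mathcal{S}''$ and $T$, any prefix of an element of $T^* \cap \pref(\Exec_{\mathcal{S}''})$ lies in that same set, so $\bar{e}' \in T^* \cap \pref(\Exec_{\mathcal{S}''})$. Minimality of $\bar{e}$ forces $\bar{e}' \notin \E_{\mathcal{S}''}^{\phi}(T)$, so by the definition of $\E_{\mathcal{S}''}^{\phi}(T)$ there exists $e' \in B_{\mathcal{S}''}^{\bar{e}'}$ with $e' \not\models_{\mathcal{S}''} \phi$, which is exactly the required witness.

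The only subtlety I anticipate is a notational one: checking that the condition in the definition of $\Emin$, which mentions only the single prefix $\bar{e}'$ obtained by dropping the last transition, is equivalent to the informal reading ``no strict prefix of $\bar{e}$ lies in $\E$''. This is fine because $\E_{\mathcal{S}''}^{\phi}(T)$ is closed under extension by prefix (if $\bar{e}_1$ is a prefix of $\bar{e}_2$ then $B^{\bar{e}_2} \subseteq B^{\bar{e}_1}$, so $\bar{e}_1 \in \E$ implies $\bar{e}_2 \in \E$), so the existence of any strict prefix in $\E$ forces $\bar{e}'$ itself to be in $\E$. With this observation the minimality argument above goes through directly and completes the proof.
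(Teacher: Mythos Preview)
Your argument is correct and follows essentially the same route as the paper's proof: the paper proves the $\supseteq$ direction by induction on $|e|$, peeling off one transition at a time until a minimal prefix is reached, whereas you invoke well-ordering of $\mathbb{N}$ to pick directly a shortest prefix of $e$ lying in $\E_{\mathcal{S}''}^{\phi}(T)$---these are two phrasings of the same idea. One minor remark: the fact that $T^* \cap \pref(\Exec_{\mathcal{S}''})$ is closed under taking prefixes does not actually need Proposition~\ref{prop:extension-execution}(a); it is immediate from the definitions of $T^*$ and $\pref$.
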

\begin{proof}
Since $\Emin_{\mathcal{S}''}^{\phi}(T) \subseteq
\E_{\mathcal{S}''}^{\phi}(T)$, we can conclude that
the set on the left hand side is a subset of the set on the 
right hand side.  Next, we prove the other inclusion.  We show that
for each $e \in \E_{\mathcal{S}''}^{\phi}(T)$ there exists 
$\bar{e} \in \Emin_{\mathcal{S}''}^{\phi}(T)$ 
such that $B_{\mathcal{S}'}^{e} \subseteq B_{\mathcal{S}'}^{\bar{e}}$
by induction on the length of $e$.
In the base case, $|e| = 0$, then  $e \in \E_{\mathcal{S}''}^{\phi}(T)$
implies $e \in \Emin_{\mathcal{S}''}^{\phi}(T)$ and,
hence, we take $\bar{e}$ to be $e$.  Let $| e | \gr 0$.  We
distinguish two cases.  If 
$\exists e' \in B_{\mathcal{S}'}^{e[|e|-1]} : e' \not\models_{\mathcal{S}} \phi$
then we also take $\bar{e}$ to be $e$.  Otherwise, 
$e[|e|-1] \in \E_{\mathcal{S}''}^{\phi}(T)$.
Obviously, $B_{\mathcal{S}'}^{e} \subseteq B_{\mathcal{S}'}^{e[|e|-1]}$  
and, by induction, 
there exists a $\bar{e} \in \Emin_{\mathcal{S}''}^{\phi}(T)$ 
such that $B_{\mathcal{S}'}^{e[|e|-1]} \subseteq B_{\mathcal{S}'}^{\bar{e}}$.
\qed
\end{proof}

\begin{proposition}
\label{prop:additive}
Let the PTSs $\mathcal{S}'$ and $\mathcal{S}''$ extend the search $T$
of the PTS $\mathcal{S}$, and let $\phi$ be a linear-time property.  
If $\E_{\mathcal{S}'}^{\phi}(T) \subseteq \E_{\mathcal{S}''}^{\phi}(T)$
then
\begin{equation}
\label{eqn:additive}
\mu_{\mathcal{S}''}(\bigcup \{\, B_{\mathcal{S}''}^e \mid e \in \Emin_{\mathcal{S}'}^{\phi}(T) \,\})
=
\sum_{e \in \Emin_{\mathcal{S}'}^{\phi}(T)} \mu_{\mathcal{S}''}(B_{\mathcal{S}''}^e).
\end{equation}
\end{proposition}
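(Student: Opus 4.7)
The plan is to recognize the claimed identity as the countable additivity of $\mu_{\mathcal{S}''}$ applied to the family $\{\, B_{\mathcal{S}''}^e \mid e \in \Emin_{\mathcal{S}'}^{\phi}(T) \,\}$. Each basic cylinder set lies in the $\sigma$-algebra $\Sigma_{\mathcal{S}''}$, and since $\Emin_{\mathcal{S}'}^{\phi}(T) \subseteq T^*$ is countable (as $T$ is finite), the only substantive task is to verify pairwise disjointness of these cylinder sets. Disjointness of $B_{\mathcal{S}''}^{e_1}$ and $B_{\mathcal{S}''}^{e_2}$ for distinct transition sequences is equivalent to neither being a prefix of the other, so the whole problem reduces to showing that $\Emin_{\mathcal{S}'}^{\phi}(T)$ is an antichain in the prefix order on $T^*$.

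The antichain property is the heart of the argument, and I would prove it by contradiction. Suppose $e_1, e_2 \in \Emin_{\mathcal{S}'}^{\phi}(T)$ with $e_1$ a proper prefix of $e_2$. Then $|e_2| \gr 0$, so by the defining condition of $\Emin$ there is some $e' \in B_{\mathcal{S}'}^{e_2[|e_2|-1]}$ with $e' \not\models_{\mathcal{S}'} \phi$. Because $e_1$ is a proper prefix of $e_2$, it is also a prefix of $e_2[|e_2|-1]$, so $B_{\mathcal{S}'}^{e_2[|e_2|-1]} \subseteq B_{\mathcal{S}'}^{e_1}$, which places $e'$ inside $B_{\mathcal{S}'}^{e_1}$. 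This contradicts the assumption $e_1 \in \Emin_{\mathcal{S}'}^{\phi}(T) \subseteq \E_{\mathcal{S}'}^{\phi}(T)$, which requires every execution path extending $e_1$ in $\mathcal{S}'$ to satisfy $\phi$.

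Once the antichain property is in hand, the cylinders $B_{\mathcal{S}''}^{e}$ indexed by $\Emin_{\mathcal{S}'}^{\phi}(T)$ are pairwise disjoint, since this is a purely combinatorial fact about prefix-incomparable sequences and cylinders that is independent of the ambient PTS. Then (\ref{eqn:additive}) follows from countable additivity of $\mu_{\mathcal{S}''}$. The main obstacle is precisely the antichain lemma; everything else is routine measure-theoretic bookkeeping that uses nothing beyond the definitions of $\E$, $\Emin$, and $B$. I note that the hypothesis $\E_{\mathcal{S}'}^{\phi}(T) \subseteq \E_{\mathcal{S}''}^{\phi}(T)$ does not appear to enter this additivity argument directly, so it is presumably included because it is needed when this proposition is combined with Proposition~\ref{prop:cylinder} later on.
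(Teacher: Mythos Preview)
Your approach is essentially the paper's own: reduce (\ref{eqn:additive}) to countable additivity of $\mu_{\mathcal{S}''}$ by showing that $\Emin_{\mathcal{S}'}^{\phi}(T)$ is a countable prefix-antichain, and your antichain argument by contradiction matches the paper's almost verbatim.

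There is one small oversight. You assert that ``each basic cylinder set lies in $\Sigma_{\mathcal{S}''}$'' and then remark that the hypothesis $\E_{\mathcal{S}'}^{\phi}(T) \subseteq \E_{\mathcal{S}''}^{\phi}(T)$ is not used. In the paper, this hypothesis is exactly what justifies that assertion: the chain
\[
\Emin_{\mathcal{S}'}^{\phi}(T) \;\subseteq\; \E_{\mathcal{S}'}^{\phi}(T) \;\subseteq\; \E_{\mathcal{S}''}^{\phi}(T) \;\subseteq\; \pref(\Exec_{\mathcal{S}''})
\]
is what guarantees that $B_{\mathcal{S}''}^{e}$ is even defined (recall $B_{\mathcal{S}''}^{e}$ is only defined for $e \in \pref(\Exec_{\mathcal{S}''})$) and hence lies in $\Sigma_{\mathcal{S}''}$. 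You could alternatively close this gap without the hypothesis by invoking Proposition~\ref{prop:extension-execution}(a), since both $\mathcal{S}'$ and $\mathcal{S}''$ extend $T$ and therefore $T^* \cap \pref(\Exec_{\mathcal{S}'}) = T^* \cap \pref(\Exec_{\mathcal{S}''})$; but as written, your measurability claim is unjustified and your remark about the hypothesis being idle is not accurate with respect to the paper's argument.
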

\begin{proof}
We have that
\begin{eqnarray*}
\Emin_{\mathcal{S}'}^{\phi}(T)
& \subseteq & \E_{\mathcal{S}'}^{\phi}(T)
\comment{by definition}\\
& \subseteq & \E_{\mathcal{S}''}^{\phi}(T)
\comment{by assumption}\\
& \subseteq & \pref(\Exec_{\mathcal{S}''})
\comment{by definition}
\end{eqnarray*}
Hence, for all $e \in \Emin_{\mathcal{S}'}^{\phi}(T)$,
we have that $B_{\mathcal{S}''}^e \in \Sigma_{\mathcal{S}''}$.
Since the set $T$ is finite, the set $T^*$ is countable and, hence,
the set $\Emin_{\mathcal{S}'}^{\phi}(T)$ is countable 
as well.  Since a $\sigma$-algebra is closed under countable
unions, $\bigcup \{\, B_{\mathcal{S}''}^e \mid e \in
\Emin_{\mathcal{S}'}^{\phi}(T) \,\} \in \Sigma_{\mathcal{S}''}$.
Hence, the measure $\mu_{\mathcal{S}''}$ is defined on this set.

To conclude (\ref{eqn:additive}), it suffices to prove that for all 
$e_1$, $e_2 \in \Emin_{\mathcal{S}'}^{\phi}(T)$ 
such that $e_1 \not= e_2$, $e_1$ is not a prefix of $e_2$,
since this implies that $B_{\mathcal{S}'}^{e_1}$ and
$B_{\mathcal{S}'}^{e_2}$ are disjoint. 
Towards a contradiction, assume that $e_1$ is a prefix of $e_2$.  
Since $\forall e_1' \in B_{\mathcal{S}'}^{e_1} : e_1' \models_{\mathcal{S}'} \phi$ and
$e_1$ is a prefix of $e_2$ and $e_1 \not= e_2$, it 
cannot be the case that
$\exists e_2' \in B_{\mathcal{S}'}^{e_2[|e_2|-1]} : e_2' \not\models_{\mathcal{S}'} \phi$.
This contradicts the assumption that 
$e_2 \in \Emin_{\mathcal{S}'}^{\phi}(T)$.
\qed
\end{proof}

Now, we are ready to prove that the PTS $\mathcal{S}_T$ is the minimal
extension of the search $T$ of the PTS $\mathcal{S}$.

\begin{proposition}
\label{prop:characterization-cylinder}
Let the PTS $\mathcal{S}'$ extend the search $T$ of the PTS $\mathcal{S}$ 
and let $\phi$ be a LTL$_+$ formula.  Then
\begin{displaymath}
\mu_{\mathcal{S}_T}(\mathcal{B}_{\mathcal{S}_T}^{\phi}(T)) 
\leq 
\mu_{\mathcal{S}'}(\mathcal{B}_{\mathcal{S}'}^{\phi}(T)).
\end{displaymath}
\end{proposition}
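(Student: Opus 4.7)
The plan is to express both measures in the inequality as sums indexed by the common set $\Emin_{\mathcal{S}_T}^{\phi}(T)$, show that the corresponding summands agree, and then appeal to monotonicity of $\mu_{\mathcal{S}'}$ plus Proposition~\ref{prop:minimal} to dominate the sum by $\mu_{\mathcal{S}'}(\mathcal{B}_{\mathcal{S}'}^{\phi}(T))$. Concretely, I would first apply Proposition~\ref{prop:cylinder} with $\mathcal{S}' = \mathcal{S}'' = \mathcal{S}_T$ to rewrite $\mathcal{B}_{\mathcal{S}_T}^{\phi}(T)$ as $\bigcup_{e \in \Emin_{\mathcal{S}_T}^{\phi}(T)} B_{\mathcal{S}_T}^e$ (here I use that, after discarding empty cylinders, $\mathcal{B}_{\mathcal{S}_T}^{\phi}(T) = \bigcup_{e \in \E_{\mathcal{S}_T}^{\phi}(T)} B_{\mathcal{S}_T}^e$), and then apply Proposition~\ref{prop:additive} with $\mathcal{S}' = \mathcal{S}'' = \mathcal{S}_T$ to get
\begin{displaymath}
\mu_{\mathcal{S}_T}(\mathcal{B}_{\mathcal{S}_T}^{\phi}(T))
=
\sum_{e \in \Emin_{\mathcal{S}_T}^{\phi}(T)} \mu_{\mathcal{S}_T}(B_{\mathcal{S}_T}^e).
\end{displaymath}

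Since every $e \in \Emin_{\mathcal{S}_T}^{\phi}(T)$ lies in $T^* \cap \pref(\Exec_{\mathcal{S}_T})$ and both $\mathcal{S}_T$ and $\mathcal{S}'$ extend $T$, Proposition~\ref{prop:extension-measure} (applied through $\mathcal{S}$) yields $\mu_{\mathcal{S}_T}(B_{\mathcal{S}_T}^e) = \mu_{\mathcal{S}'}(B_{\mathcal{S}'}^e)$ for every such $e$. By Proposition~\ref{prop:minimal} we have $\E_{\mathcal{S}_T}^{\phi}(T) \subseteq \E_{\mathcal{S}'}^{\phi}(T)$, so the hypothesis of Proposition~\ref{prop:additive} is met with $\mathcal{S}'$ playing the role of the primed system and $\mathcal{S}_T$ the role of the doubly-primed system (or rather, the inclusion needed is exactly the one Proposition~\ref{prop:additive} requires to swap a countable sum for the measure of a disjoint union). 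Applying Proposition~\ref{prop:additive} in this direction turns the sum above into
\begin{displaymath}
\sum_{e \in \Emin_{\mathcal{S}_T}^{\phi}(T)} \mu_{\mathcal{S}'}(B_{\mathcal{S}'}^e)
=
\mu_{\mathcal{S}'}\bigl(\bigcup \{\, B_{\mathcal{S}'}^e \mid e \in \Emin_{\mathcal{S}_T}^{\phi}(T) \,\}\bigr).
\end{displaymath}

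Finally, since $\Emin_{\mathcal{S}_T}^{\phi}(T) \subseteq \E_{\mathcal{S}_T}^{\phi}(T) \subseteq \E_{\mathcal{S}'}^{\phi}(T)$, the union on the right is contained in $\bigcup_{e \in \E_{\mathcal{S}'}^{\phi}(T)} B_{\mathcal{S}'}^e = \mathcal{B}_{\mathcal{S}'}^{\phi}(T)$, and monotonicity of $\mu_{\mathcal{S}'}$ completes the inequality. The main obstacle I anticipate is purely bookkeeping: making sure that Proposition~\ref{prop:additive} is invoked with the right pair of PTSs in each of its two applications, and that the cylinder sets whose measures are being summed actually lie in the appropriate $\sigma$-algebras (which follows from $\Emin_{\mathcal{S}_T}^{\phi}(T) \subseteq T^* \cap \pref(\Exec_{\mathcal{S}_T}) = T^* \cap \pref(\Exec_{\mathcal{S}'})$ via Proposition~\ref{prop:extension-execution}). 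The substantive content of the argument is already packaged in Propositions~\ref{prop:minimal},~\ref{prop:cylinder}, and~\ref{prop:additive}; once those are in hand, the proof is essentially an algebraic assembly.
\qed
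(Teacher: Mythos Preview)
Your proposal is correct and follows essentially the same chain of equalities and the same propositions as the paper's own proof: rewrite $\mathcal{B}_{\mathcal{S}_T}^{\phi}(T)$ via $\Emin_{\mathcal{S}_T}^{\phi}(T)$ using Proposition~\ref{prop:cylinder}, split into a sum via Proposition~\ref{prop:additive}, transfer each summand to $\mathcal{S}'$ via Proposition~\ref{prop:extension-measure}, reassemble via Propositions~\ref{prop:minimal} and~\ref{prop:additive}, and finish with monotonicity. One small bookkeeping slip: in the second application of Proposition~\ref{prop:additive} the roles are reversed from what you first wrote---$\mathcal{S}_T$ should play the role of the primed system and the given extension $\mathcal{S}'$ the role of the doubly-primed system, so that the required hypothesis is $\E_{\mathcal{S}_T}^{\phi}(T) \subseteq \E_{\mathcal{S}'}^{\phi}(T)$, exactly what Proposition~\ref{prop:minimal} provides---but your parenthetical self-correction shows you already see this.
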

\begin{proof}
\begin{eqnarray*}
\lefteqn{\mu_{\mathcal{S}_T}(\mathcal{B}_{\mathcal{S}_T}^{\phi}(T))}\\
& = & \mu_{\mathcal{S}_T}(\bigcup \{\, B_{\mathcal{S}_T}^e \mid e \in \E_{\mathcal{S}_T}^{\phi}(T) \,\})\\
& = & \mu_{\mathcal{S}_T}(\bigcup \{\, B_{\mathcal{S}_T}^e \mid e \in \Emin_{\mathcal{S}_T}^{\phi}(T) \,\})
\comment{Proposition~\ref{prop:cylinder}}\\
& = & \sum_{e \in \Emin_{\mathcal{S}_T}^{\phi}(T)} \mu_{\mathcal{S}_T}(B_{\mathcal{S}_T}^e)
\comment{Proposition~\ref{prop:additive}}\\
& = & \sum_{e \in \Emin_{\mathcal{S}_T}^{\phi}(T)} \mu_{\mathcal{S}'}(B_{\mathcal{S}'}^e)
\comment{Proposition~\ref{prop:extension-measure}}\\
& = & \mu_{\mathcal{S}'}(\bigcup \{\, B_{\mathcal{S}'}^e \mid e \in
\Emin_{\mathcal{S}_T}^{\phi}(T) \,\})
\comment{Proposition~\ref{prop:minimal} and \ref{prop:additive}}\\
& = & \mu_{\mathcal{S}'}(\bigcup \{\, B_{\mathcal{S}'}^e \mid e \in \E_{\mathcal{S}_T}^{\phi}(T) \,\})
\comment{Proposition~\ref{prop:cylinder}}\\
& \leq & \mu_{\mathcal{S}'}(\bigcup \{\, B_{\mathcal{S}'}^e \mid e \in \E_{\mathcal{S}'}^{\phi}(T) \,\})
\comment{Proposition~\ref{prop:minimal}}\\
& = & \mu_{\mathcal{S}'}(\mathcal{B}_{\mathcal{S}'}^{\phi}(T))
\end{eqnarray*}
\qed
\end{proof}

The above proposition gives us an alternative characterization
of the progress measure.

\begin{theorem}
\label{thm:characterization-progress}
Let $T$ be a search of the PTS $\mathcal{S}$ and let $\phi$ be a LTL$_+$ 
formula.  Then
\begin{displaymath}
\prog_{\mathcal{S}}(T, \phi) = \mu_{\mathcal{S}_T}(\mathcal{B}_{\mathcal{S}_T}^{\phi}(T)).
\end{displaymath}
\end{theorem}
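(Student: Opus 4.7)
The plan is to prove the two inequalities separately, both of which follow quite directly from results already established.

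For the inequality $\prog_{\mathcal{S}}(T,\phi) \leq \mu_{\mathcal{S}_T}(\mathcal{B}_{\mathcal{S}_T}^{\phi}(T))$, I would observe that by the earlier proposition, $\mathcal{S}_T$ itself extends $T$. Hence $\mu_{\mathcal{S}_T}(\mathcal{B}_{\mathcal{S}_T}^{\phi}(T))$ is one particular element of the set whose infimum defines $\prog_{\mathcal{S}}(T,\phi)$, so the progress cannot exceed it.

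For the reverse inequality $\mu_{\mathcal{S}_T}(\mathcal{B}_{\mathcal{S}_T}^{\phi}(T)) \leq \prog_{\mathcal{S}}(T,\phi)$, I would invoke Proposition~\ref{prop:characterization-cylinder} directly: for every PTS $\mathcal{S}'$ extending $T$ we have $\mu_{\mathcal{S}_T}(\mathcal{B}_{\mathcal{S}_T}^{\phi}(T)) \leq \mu_{\mathcal{S}'}(\mathcal{B}_{\mathcal{S}'}^{\phi}(T))$. This shows $\mu_{\mathcal{S}_T}(\mathcal{B}_{\mathcal{S}_T}^{\phi}(T))$ is a lower bound for the set $\{\mu_{\mathcal{S}'}(\mathcal{B}_{\mathcal{S}'}^{\phi}(T)) \mid \mathcal{S}' \text{ extends } T \text{ of } \mathcal{S}\}$, and is therefore at most the infimum, which is $\prog_{\mathcal{S}}(T,\phi)$.

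Combining both inequalities yields the equality. There is essentially no obstacle here, since all the real work has already been done in Proposition~\ref{prop:characterization-cylinder} (which is where the LTL$_+$ hypothesis is actually used, via Proposition~\ref{prop:minimal}). The theorem is really a packaging statement: Proposition~\ref{prop:characterization-cylinder} shows $\mathcal{S}_T$ is a witness that achieves the infimum in the definition of $\prog$, and the theorem merely reformulates this as an equation. So the proof should be a couple of lines, simply citing Proposition~\ref{prop:characterization-cylinder} and the definition of $\prog_{\mathcal{S}}(T,\phi)$ as an infimum.
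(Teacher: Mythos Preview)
Your proposal is correct and matches the paper's own proof, which simply states that the result is a direct consequence of the definition of the progress measure and Proposition~\ref{prop:characterization-cylinder}. You have merely spelled out the two directions explicitly, using in addition the fact that $\mathcal{S}_T$ extends $T$ to see that the infimum is attained.
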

\begin{proof}
This is a direct consequence of the definition of the progress measure
and Proposition~\ref{prop:characterization-cylinder}.
\qed
\end{proof}

Hence, in order to compute $\prog_{\mathcal{S}}(T, \phi)$, it
suffices to compute the measure of $\mathcal{B}_{\mathcal{S}_T}^{\phi}(T)$.
Next, we will show that the latter is equal to the measure of
the set of execution paths of $\mathcal{S}_T$ that satisfy $\phi$.
The proof consists of two parts.  First, we prove the following
inclusion.

\begin{proposition}
\label{prop:characterization-logical-inclusion}
Let $T$ be a search of the PTS $\mathcal{S}$ and let $\phi$ be a
linear-time property.  Then
\begin{displaymath}
\mathcal{B}_{\mathcal{S}_T}^{\phi}(T)
\subseteq
\{\, e \in \Exec_{\mathcal{S}_T} \mid e \models_{\mathcal{S}_T} \phi \,\}.
\end{displaymath}
\end{proposition}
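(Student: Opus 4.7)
The plan is to unfold the definition of $\mathcal{B}_{\mathcal{S}_T}^{\phi}(T)$ and observe that the desired inclusion is essentially immediate. Recall that
\[
\mathcal{B}_{\mathcal{S}_T}^{\phi}(T)
=
\bigcup \{\, B_{\mathcal{S}_T}^{\bar{e}} \mid \bar{e} \in T^*
\wedge \forall e' \in B_{\mathcal{S}_T}^{\bar{e}} : e' \models_{\mathcal{S}_T} \phi \,\},
\]
so the set is by construction a union of basic cylinder sets \emph{all of whose members satisfy $\phi$}. The whole point of the definition is that it picks only those cylinders that are ``safe'' with respect to $\phi$, so any execution path living in such a cylinder must itself satisfy $\phi$.

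Concretely, I would take an arbitrary $e \in \mathcal{B}_{\mathcal{S}_T}^{\phi}(T)$ and obtain a witness $\bar{e} \in T^*$ such that $e \in B_{\mathcal{S}_T}^{\bar{e}}$ and $\forall e' \in B_{\mathcal{S}_T}^{\bar{e}} : e' \models_{\mathcal{S}_T} \phi$. Since $B_{\mathcal{S}_T}^{\bar{e}} \subseteq \Exec_{\mathcal{S}_T}$ by the definition of a basic cylinder set, we get $e \in \Exec_{\mathcal{S}_T}$. Instantiating the universal quantifier at $e' = e$ yields $e \models_{\mathcal{S}_T} \phi$. This places $e$ in the right-hand side, establishing the inclusion.

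There is no real obstacle here: the proof is a direct definition chase and does not require either LTL$_+$-specific reasoning or the construction $\mathcal{S}_T$ in any essential way (indeed the statement is given for arbitrary linear-time properties and works in any extension of $T$). The nontrivial content lies in the converse inclusion, which is where LTL$_+$ and Proposition~\ref{prop:pnf} will presumably be used: to go from $e \models_{\mathcal{S}_T} \phi$ to membership in some basic cylinder all of whose paths satisfy $\phi$, one needs to exploit the monotonicity-style property that extending a trace by $\emptyset^{\omega}$-padding preserves satisfaction in LTL$_+$.
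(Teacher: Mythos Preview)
Your proof is correct and matches the paper's argument essentially line for line: take $e \in \mathcal{B}_{\mathcal{S}_T}^{\phi}(T)$, extract a witnessing prefix whose cylinder consists entirely of $\phi$-satisfying paths, and instantiate at $e$. Your additional remarks about the converse being the nontrivial direction and requiring Proposition~\ref{prop:pnf} are also accurate.
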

\begin{proof}
Let $e \in \mathcal{B}_{\mathcal{S}_T}^{\phi}(T)$.
Then $e \in B_{\mathcal{S}_T}^{e'}$ for some $e' \in T^*$ such that
$\forall e'' \in B_{\mathcal{S}_T}^{e'} : e'' \models_{\mathcal{S}_T} \phi$.
Hence, $e \models_{\mathcal{S}_T} \phi$.
\qed
\end{proof}

The opposite inclusion does not hold in general, as shown in the
following example.

\begin{example}
Consider the PTS $\mathcal{S}$
\begin{displaymath}
\UseComputerModernTips
\xymatrix{
s_0 \ar[r]_{\frac{1}{2}} \ar@(r,u)[]_{\frac{1}{2}} & s_1 \ar@(r,u)[]_{1}}
\end{displaymath}
Consider the search $\{ t_{00} \}$.  Then the PTS $\mathcal{S}_T$ can be depicted by
\begin{displaymath}
\UseComputerModernTips
\xymatrix{
s_0 \ar[r]_{\frac{1}{2}} \ar@(r,u)[]_{\frac{1}{2}} & s_{\perp} \ar@(r,u)[]_{1}}
\end{displaymath}
Assume that the state $s_0$ satisfies the atomic proposition $a$.  
Hence, ${t_{00}}^{\omega} \models_{\mathcal{S}_T} \always a$.
By construction, the state~$s_{\perp}$ does not satisfy $a$.
Therefore, ${t_{00}}^{\omega} \not\in \mathcal{B}_{\mathcal{S}_T}^{\always a}$.
\end{example}

However, we will show that the set
$\{\, e \in \Exec_{\mathcal{S}_T} \mid e \models_{\mathcal{S}_T} \phi \,\}
\setminus
\mathcal{B}_{\mathcal{S}_T}^{\phi}(T)$
has measure zero.  In the proof, we will use the following proposition.

\begin{proposition}
\label{prop:simplest-property}
Let $T$ be a search of the PTS $\mathcal{S}$ and let $\phi$ be a
linear-time property.  Assume that $T$ has not found a violation of $\phi$.
Then for all $e \in T^{\omega} \cap \Exec_{\mathcal{S}_T}$, 
$e \models_{\mathcal{S}_T} \phi$.
\end{proposition}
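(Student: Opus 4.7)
The plan is to unpack the hypothesis that $T$ has not found a violation and then transfer both membership in $\Exec$ and satisfaction of $\phi$ from the witnessing extension to $\mathcal{S}_T$ using the propagation lemmas for extensions that have already been established.

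First I would invoke Definition~\ref{def:violation} to get a witness: a PTS $\mathcal{S}'$ that extends $T$ of $\mathcal{S}$ and satisfies $e \models_{\mathcal{S}'} \phi$ for every $e \in \Exec_{\mathcal{S}'}$. Simultaneously, the PTS $\mathcal{S}_T$ also extends $T$ of $\mathcal{S}$ by the proposition preceding Definition of $\E_{\mathcal{S}}^{\phi}(T)$. So I have two extensions of the same search available.

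Next, fix an arbitrary $e \in T^{\omega} \cap \Exec_{\mathcal{S}_T}$. Applying Proposition~\ref{prop:extension-execution}(b) to $\mathcal{S}_T$ gives $T^{\omega} \cap \Exec_{\mathcal{S}_T} = T^{\omega} \cap \Exec_{\mathcal{S}}$, and applying it to $\mathcal{S}'$ gives $T^{\omega} \cap \Exec_{\mathcal{S}} = T^{\omega} \cap \Exec_{\mathcal{S}'}$. Chaining these equalities puts $e$ into $\Exec_{\mathcal{S}'}$, so by the witness property $e \models_{\mathcal{S}'} \phi$.

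Finally, I would transport satisfaction back to $\mathcal{S}_T$ by applying Proposition~\ref{prop:extension-satisfaction} twice, using $\mathcal{S}$ as a bridge: once to conclude $e \models_{\mathcal{S}} \phi$ from $e \models_{\mathcal{S}'} \phi$ (valid since $e \in T^{\omega} \cap \Exec_{\mathcal{S}}$), and once to conclude $e \models_{\mathcal{S}_T} \phi$ from $e \models_{\mathcal{S}} \phi$. There is no real obstacle here; the only point requiring any care is to make sure that when one applies the extension propositions, $e$ in fact lies in each of the relevant $T^{\omega} \cap \Exec_{(\cdot)}$ sets, which is immediate from the chain of equalities above. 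Note in particular that the positivity of $\phi$ (membership in LTL$_+$) is not needed — the argument works for an arbitrary linear-time property, as stated.
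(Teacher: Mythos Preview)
Your proof is correct and follows essentially the same approach as the paper's: obtain the witness extension $\mathcal{S}'$ from Definition~\ref{def:violation}, use Proposition~\ref{prop:extension-execution}(b) to place $e$ in $\Exec_{\mathcal{S}'}$, and then use Proposition~\ref{prop:extension-satisfaction} to transfer satisfaction back to $\mathcal{S}_T$. The only difference is that you spell out the two-step bridge through $\mathcal{S}$ explicitly, whereas the paper collapses this into a single invocation of Proposition~\ref{prop:extension-satisfaction}; your version is arguably more careful, but the substance is the same.
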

\begin{proof}
Let $e \in T^{\omega} \cap \Exec_{\mathcal{S}_T}$.  
Since $T$ has not found a violation of $\phi$, by definition
there exists a PTS $\mathcal{S}'$ that extends $T$ of $\mathcal{S}$
such that $e' \models_{\mathcal{S}'} \phi$ for all 
$e' \in \Exec_{\mathcal{S}'}$.  Then 
$e \in \Exec_{\mathcal{S}'} \cap T^{\omega}$ by 
Proposition~\ref{prop:extension-execution}(b),
because $\mathcal{S}'$ and $\mathcal{S}_T$ both extend $T$.
Hence, $e \models_{\mathcal{S}'} \phi$.  Therefore, from
Proposition~\ref{prop:extension-satisfaction} we can conclude 
that $e \models_{\mathcal{S}_T} \phi$.
\qed
\end{proof}

\begin{proposition}
\label{prop:characterization-logical-measure}
Let $T$ be a search of the PTS $\mathcal{S}$ and let $\phi$ be a
LTL$_+$ formula.  If $T$ has not found a violation of $\phi$ then
\begin{displaymath}
\mu_{\mathcal{S}_T}(\{\, e \in \Exec_{\mathcal{S}_T} \mid e \models_{\mathcal{S}_T} \phi \,\}
\setminus
\mathcal{B}_{\mathcal{S}_T}^{\phi}(T))
= 0.
\end{displaymath}
\end{proposition}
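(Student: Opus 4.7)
The plan is to partition $\Exec_{\mathcal{S}_T}$ into paths that eventually reach $s_{\perp}$ (via some escape transition $t_s$) and paths that stay in $T^{\omega}$ throughout, and to dispatch each part by a separate argument. For the first group, each such path has the form $e_f t_s t_{\perp}^{\omega}$ with $e_f \in T^*$, and since $\lab_{\mathcal{S}_T}(s_{\perp}) = \emptyset$, its trace is $\trace_{\mathcal{S}_T}(e_f) \emptyset^{\omega}$. Assuming it satisfies $\phi$, Proposition~\ref{prop:pnf} yields $\trace_{\mathcal{S}_T}(e_f) \rho \models \phi$ for every $\rho \in (2^{\AP})^{\omega}$; since every $e'' \in B_{\mathcal{S}_T}^{e_f}$ has trace of precisely this form, $e_f \in \E_{\mathcal{S}_T}^{\phi}(T)$ and the path lies in $\mathcal{B}_{\mathcal{S}_T}^{\phi}(T)$. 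Hence
\begin{displaymath}
\{\, e \in \Exec_{\mathcal{S}_T} \mid e \models_{\mathcal{S}_T} \phi \,\} \setminus \mathcal{B}_{\mathcal{S}_T}^{\phi}(T) \subseteq T^{\omega} \cap \Exec_{\mathcal{S}_T},
\end{displaymath}
and it remains to show $\mu_{\mathcal{S}_T}((T^{\omega} \cap \Exec_{\mathcal{S}_T}) \setminus \mathcal{B}_{\mathcal{S}_T}^{\phi}(T)) = 0$.

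For this I would isolate a ``safe core'' $X \subseteq S_{\mathcal{S}}^{T}$, defined as the largest set of states $s$ satisfying $\out_{\mathcal{S}}(s) = 1$ and $\target_{\mathcal{S}}(t) \in X$ for every $t \in T$ with $\source_{\mathcal{S}}(t) = s$. From any state of $X$ no escape transition exists in $\mathcal{S}_T$ and every $T$-transition re-enters $X$, so all continuations lie in $T^{\omega} \cap \Exec_{\mathcal{S}_T}$; by the hypothesis and Proposition~\ref{prop:simplest-property} they all satisfy $\phi$. Therefore any prefix $e_n \in T^n \cap \pref(\Exec_{\mathcal{S}_T})$ whose end-state lies in $X$ (or the empty prefix when $s_0 \in X$) belongs to $\E_{\mathcal{S}_T}^{\phi}(T)$, and every execution path extending $e_n$ belongs to $\mathcal{B}_{\mathcal{S}_T}^{\phi}(T)$.

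It thus suffices to show that almost every path in $T^{\omega} \cap \Exec_{\mathcal{S}_T}$ has some prefix ending in $X$. Let $Y := S_{\mathcal{S}}^{T} \setminus X$ and let $p(s)$ denote the $\mathcal{S}_T$-probability that a path starting at $s$ stays in $Y$ forever. If $M := \max_{s \in Y} p(s) > 0$, then the set $Z := \{\, s \in Y \mid p(s) = M \,\}$ must, by the recursion defining $p$, consist of states $s$ with $\out_{\mathcal{S}}(s) = 1$ whose $T$-successors all lie in $Z$; by the maximality defining $X$ this forces $Z \subseteq X$, contradicting $Z \subseteq Y$. Hence $p(s) = 0$ for every $s \in Y$, so the set of paths that remain in $Y$ forever has measure zero, which finishes the argument. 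The main obstacle is setting up this maximum argument cleanly; the rest assembles routinely from the pieces already in place.
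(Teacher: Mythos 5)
Your proposal is correct, and its first half coincides with the paper's proof: both use Proposition~\ref{prop:pnf} to show that any path of the form $e_f t_s {t_{\perp}}^{\omega}$ satisfying $\phi$ already lies in $\mathcal{B}_{\mathcal{S}_T}^{\phi}(T)$, so the residual set is contained in $T^{\omega} \cap \Exec_{\mathcal{S}_T}$. Where you diverge is the measure-zero argument. The paper shows that every prefix $e$ of a residual path admits a violating extension $e' \in B_{\mathcal{S}_T}^{e}$ (otherwise $e$ would witness membership in $\mathcal{B}_{\mathcal{S}_T}^{\phi}(T)$); by Proposition~\ref{prop:simplest-property} any violating path must leave $T^{\omega}$, hence reaches $s_{\perp}$, so every state visited by a residual path is transient, and the paper then cites a standard Markov-chain result (Ash) that the probability of remaining forever in a finite set of transient states is zero. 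You instead isolate the greatest fixpoint $X$ of states with $\out_{\mathcal{S}}(s) = 1$ whose $T$-successors stay in $X$, observe via Proposition~\ref{prop:simplest-property} that any prefix ending in $X$ lies in $\E_{\mathcal{S}_T}^{\phi}(T)$, and kill the paths confined to $Y = S_{\mathcal{S}}^{T} \setminus X$ by a maximum principle: if $M = \max_{s \in Y} p(s) > 0$, the equality case of the one-step recursion forces the maximizing set to be a safe set, hence inside $X$ by maximality, a contradiction. Your $X$ is exactly the set of states that cannot reach $s_{\perp}$, so your $Y$ coincides with the paper's transient region, and your maximum argument is in effect an elementary, self-contained proof of the transience lemma the paper imports. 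What the paper's route buys is brevity; what yours buys is independence from the cited Markov-chain theory and a constructively computable safe core. The only points you should tighten are routine: $p(s)$ must be formalized for non-initial states $s$ (the paper's measure space is rooted at $s_0$, so you need the shifted PTS and the one-step decomposition of cylinder sets to justify the recursion), you should note that $Y$ is finite (as $T$ is finite) so the maximum is attained and the maximizing set is nonempty, and measurability of the residual set must be noted so that containment in a null set yields measure zero --- all at the same level of rigor the paper itself assumes.
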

\begin{proof}
To avoid clutter, we denote the set
$\{\, e \in \Exec_{\mathcal{S}_T} \mid e \models_{\mathcal{S}_T} \phi \,\}
\setminus
\mathcal{B}_{\mathcal{S}_T}^{\phi}(T)$ by $Z$.

First, we show that $Z \subseteq T^{\omega}$.  Assume that $e \in Z$.
Towards a contradiction, suppose that $e \not\in T^{\omega}$.
From the construction of $\mathcal{S}_T$ we can deduce that
$e = e' t_s {t_{\perp}}^{\omega}$ for some $e' \in T^*$.  Let
$\trace_{\mathcal{S}_T}(e') = \sigma$.  Then 
$\trace_{\mathcal{S}_T}(e) = \sigma \emptyset^{\omega}$.
Since $e \in Z$, we have that
$e \models_{\mathcal{S}_T} \phi$ and, hence, 
$\sigma \emptyset^{\omega} \models \phi$.  By Proposition~\ref{prop:pnf},
$\forall \rho \in (2^{\AP})^{\omega} : \sigma \rho \models \phi$.
Hence, $\forall e'' \in B_{\mathcal{S}_T}^{e'} : e'' \models_{\mathcal{S}_T} \phi$.
Since $e \in B_{\mathcal{S}_T}^{e'}$, we have that 
$e \in \mathcal{B}_{\mathcal{S}_T}^{\phi}(T)$, which
contradicts our assumption that $e \in Z$.

Next, we show that each state in 
$\{\, \target_{\mathcal{S}_T}(e) \mid e \in \pref(Z) \,\}$ 
is transient.  Roughly speaking, a state $s$ is transient 
if the probability of reaching $s$ in one or more transitions when starting 
in $s$ is strictly less than one (see, for example, \cite[Section~7.3]{A70} 
for a formal definition).  It suffices to show that each state in 
$\{\, \target_{\mathcal{S}_T}(e) \mid e \in \pref(Z) \,\}$ can reach
the state $s_{\perp}$, since in that case the probability of reaching 
$s_{\perp}$ and, hence, not returning to the state itself, is greater than 
zero.

Since $T$ has not found a violation of $\phi$, we can conclude from
Proposition~\ref{prop:simplest-property} that
$e \models_{\mathcal{S}_T} \phi$ for all $e \in T^{\omega}$.  Hence,
from the construction of $\mathcal{S}_T$ we can deduce that if 
$e \not\models_{\mathcal{S}_T} \phi$ then $e \not\in T^{\omega}$ and,
hence, $e$ reaches~$s_{\perp}$.

Let $e \in \pref(Z)$.  Hence, there exists $e' \in
B_{\mathcal{S}_T}^e$ such that $e' \not\models_{\mathcal{S}_T} \phi$.
Therefore, $e'$ reaches $s_{\perp}$ and, hence, 
$\target_{\mathcal{S}_T}(e)$ can reach $s_{\perp}$.

Since $Z \subseteq T^{\omega}$, the set 
$\{\, \target_{\mathcal{S}_T}(e) \mid e \in \pref(Z) \,\}$ is finite.
According to \cite[page~223]{A70}, the probability of remaining in a finite set
of transient states is zero.  As a consequence, the probability of 
remaining in the set 
$\{\, \target_{\mathcal{S}'}(e) \mid e \in \pref(Z) \,\}$
is zero.  Hence, we can conclude that $\mu_{\mathcal{S}_T}(Z) = 0$.
\qed
\end{proof}

From the above, we can derive the following result.

\begin{theorem}
\label{thm:lowerbound}
Let $T$ be a search of the PTS $\mathcal{S}$ and let $\phi$ be a
LTL$_+$ formula.  If $T$ has not found a violation of $\phi$ then
\begin{displaymath}
\mu_{\mathcal{S}_T}(\mathcal{B}_{\mathcal{S}_T}^{\phi}(T))
=
\mu_{\mathcal{S}_T}(\{\, e \in \Exec_{\mathcal{S}_T} \mid e \models_{\mathcal{S}_T} \phi \,\}).
\end{displaymath}
\end{theorem}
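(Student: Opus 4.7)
The plan is to obtain this identity as a direct corollary of the two preceding propositions. Proposition~\ref{prop:characterization-logical-inclusion} gives the inclusion $\mathcal{B}_{\mathcal{S}_T}^{\phi}(T) \subseteq \{\, e \in \Exec_{\mathcal{S}_T} \mid e \models_{\mathcal{S}_T} \phi \,\}$ without any extra hypothesis, so the measure of the left-hand side is automatically at most the measure of the right-hand side. The non-trivial direction is the reverse inequality, and this is precisely where the assumption that $T$ has not found a violation of $\phi$ enters, via Proposition~\ref{prop:characterization-logical-measure}.

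Concretely, I would write the set on the right as the disjoint union
\begin{displaymath}
\{\, e \in \Exec_{\mathcal{S}_T} \mid e \models_{\mathcal{S}_T} \phi \,\}
=
\mathcal{B}_{\mathcal{S}_T}^{\phi}(T) \cup Z,
\end{displaymath}
where $Z$ denotes the set difference $\{\, e \in \Exec_{\mathcal{S}_T} \mid e \models_{\mathcal{S}_T} \phi \,\} \setminus \mathcal{B}_{\mathcal{S}_T}^{\phi}(T)$. The union is disjoint by construction, and both pieces lie in $\Sigma_{\mathcal{S}_T}$: $\mathcal{B}_{\mathcal{S}_T}^{\phi}(T)$ is measurable by \cite[Proposition~1]{ZB11:icalp} as already recalled in the paper, and the satisfaction set is measurable since $\phi$ is a linear-time (in particular, LTL$_+$) property, so $Z$ is measurable as their difference.

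By countable additivity of $\mu_{\mathcal{S}_T}$,
\begin{displaymath}
\mu_{\mathcal{S}_T}(\{\, e \in \Exec_{\mathcal{S}_T} \mid e \models_{\mathcal{S}_T} \phi \,\})
=
\mu_{\mathcal{S}_T}(\mathcal{B}_{\mathcal{S}_T}^{\phi}(T)) + \mu_{\mathcal{S}_T}(Z).
\end{displaymath}
Under the hypothesis that $T$ has not found a violation of $\phi$, Proposition~\ref{prop:characterization-logical-measure} yields $\mu_{\mathcal{S}_T}(Z) = 0$, and the desired equality follows.

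There is really no hard step left in this theorem: all the combinatorial and probabilistic work has already been done, first in the structural argument of Proposition~\ref{prop:characterization-logical-inclusion} and then in the transience argument of Proposition~\ref{prop:characterization-logical-measure} (which in turn relied crucially on Proposition~\ref{prop:pnf} about LTL$_+$). The only point deserving any care is the implicit verification that $Z$ is in fact measurable, which, as noted above, is immediate.
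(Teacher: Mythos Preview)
Your proof is correct and follows essentially the same approach as the paper: use Proposition~\ref{prop:characterization-logical-inclusion} for the inclusion, split the satisfaction set into $\mathcal{B}_{\mathcal{S}_T}^{\phi}(T)$ and the remainder $Z$, apply additivity of $\mu_{\mathcal{S}_T}$, and then invoke Proposition~\ref{prop:characterization-logical-measure} to conclude that $\mu_{\mathcal{S}_T}(Z)=0$. Your explicit remark on the measurability of $Z$ is a welcome clarification that the paper leaves implicit.
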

\begin{proof}
\begin{eqnarray*}
\lefteqn{\mu_{\mathcal{S}_T}(\mathcal{B}_{\mathcal{S}_T}^{\phi}(T))}\\
& \leq &
\mu_{\mathcal{S}_T}(\{\, e \in \Exec_{\mathcal{S}_T} \mid e \models_{\mathcal{S}_T} \phi \,\})
\comment{Proposition~\ref{prop:characterization-logical-inclusion} and 
$\mu_{\mathcal{S}_T}$ is monotone}\\
& = & \mu_{\mathcal{S}_T}(\mathcal{B}_{\mathcal{S}_T}^{\phi}(T))
+ \mu_{\mathcal{S}_T}(\{\, e \in \Exec_{\mathcal{S}_T} \mid e \models_{\mathcal{S}_T} \phi \,\} \setminus \mathcal{B}_{\mathcal{S}_T}^{\phi}(T))
\comment{Proposition~\ref{prop:characterization-logical-inclusion} and 
$\mu_{\mathcal{S}_T}$ is additive}\\
& = & \mu_{\mathcal{S}_T}(\mathcal{B}_{\mathcal{S}_T}^{\phi}(T))
\comment{Proposition~\ref{prop:characterization-logical-measure}}
\end{eqnarray*}
\qed
\end{proof}

Combining Theorem~\ref{thm:characterization-progress} 
and \ref{thm:lowerbound}, we obtain the following characterization
of the progress measure.

\begin{corollary}
\label{cor:characterization-progress}
Let $T$ be a search of the PTS $\mathcal{S}$ and let $\phi$ be a
LTL$_+$ formula.  If $T$ has not found a violation of $\phi$ then
\begin{displaymath}
\prog_{\mathcal{S}}(T, \phi)
=
\mu_{\mathcal{S}_T}(\{\, e \in \Exec_{\mathcal{S}_T} \mid e \models_{\mathcal{S}_T} \phi \,\}).
\end{displaymath}
\end{corollary}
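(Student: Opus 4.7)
The plan is simply to chain the two main theorems that have just been established. Theorem~\ref{thm:characterization-progress} rewrites $\prog_{\mathcal{S}}(T, \phi)$ as $\mu_{\mathcal{S}_T}(\mathcal{B}_{\mathcal{S}_T}^{\phi}(T))$, independently of any assumption about violations. Theorem~\ref{thm:lowerbound}, under the hypothesis that $T$ has not found a violation of $\phi$, equates $\mu_{\mathcal{S}_T}(\mathcal{B}_{\mathcal{S}_T}^{\phi}(T))$ with $\mu_{\mathcal{S}_T}(\{\, e \in \Exec_{\mathcal{S}_T} \mid e \models_{\mathcal{S}_T} \phi \,\})$. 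Transitivity of equality then yields the corollary.

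Concretely, I would first invoke Theorem~\ref{thm:characterization-progress} to replace the left-hand side $\prog_{\mathcal{S}}(T, \phi)$ by $\mu_{\mathcal{S}_T}(\mathcal{B}_{\mathcal{S}_T}^{\phi}(T))$. Then, citing the hypothesis that $T$ has not found a violation of $\phi$, I would apply Theorem~\ref{thm:lowerbound} to rewrite this quantity as $\mu_{\mathcal{S}_T}(\{\, e \in \Exec_{\mathcal{S}_T} \mid e \models_{\mathcal{S}_T} \phi \,\})$, which is exactly the right-hand side of the claim.

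There is no real obstacle here: all of the substantive work (the minimal-extension characterization of Proposition~\ref{prop:characterization-cylinder}, the LTL$_+$-specific argument using Proposition~\ref{prop:pnf}, and the transience argument based on Proposition~\ref{prop:simplest-property}) has already been absorbed into Theorems~\ref{thm:characterization-progress} and~\ref{thm:lowerbound}. The only point worth verifying is that the hypothesis ``$T$ has not found a violation of $\phi$'' is precisely the hypothesis needed by Theorem~\ref{thm:lowerbound}, while Theorem~\ref{thm:characterization-progress} requires no such hypothesis; both conditions are met, so the two-line chain of equalities closes the proof.
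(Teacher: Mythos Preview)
Your proposal is correct and matches the paper's own proof, which simply states that the result is an immediate consequence of Theorem~\ref{thm:characterization-progress} and Theorem~\ref{thm:lowerbound}. Your observation about which hypothesis is needed where is accurate and the chain of equalities is exactly as intended.
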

\begin{proof}
Immediate consequence of Theorem~\ref{thm:characterization-progress} 
and \ref{thm:lowerbound}.
\qed
\end{proof}

How to compute 
$\mu_{\mathcal{S}_T}(\{\, e \in \Exec_{\mathcal{S}_T} \mid e \models_{\mathcal{S}_T} \phi \,\})$
can be found, for example, in \cite[Section~3.1]{CY95:jacm}.  Computing
this measure is exponential in the size of $\phi$ and polynomial in 
the size of $T$.

\section{An Algorithm to Efficiently Compute a Lower Bound of Progress}

The algorithm developed in the previous section to compute 
$\prog_{\mathcal{S}}(T, \phi)$ is exponential in the size of~$\phi$.
In this section, we trade precision for efficiency.  We present
an algorithm that does not compute $\prog_{\mathcal{S}}(T, \phi)$, 
but only provides a lower bound in polynomial time.  This lower bound 
is tight for invariants.  However, we also show an example in which 
the lower bound does not provide us any information.  

Next, we show that subsets of $\Exec_{\mathcal{S}}$ can be
characterized as countable intersections of countable unions
of basic cylinder sets.  For $A \subseteq \Exec_{\mathcal{S}}$ 
and $n \in \Nset$, we use $A[n]$ to denote the set
$\{\, e[n] \mid e \in A \,\}$,
where $e[n]$ denotes the execution path $e$ truncated at length $n$.
We prove the characterization by showing two inclusions.  The
first inclusion holds for arbitrary subsets of $\Exec_{\mathcal{S}}$.

\begin{proposition}
\label{prop:subset-truncation}
For PTS $\mathcal{S}$, let $A \subseteq \Exec_{\mathcal{S}}$.  Then
\begin{displaymath}
A \subseteq \bigcap_{n \in \Nset} \bigcup_{e \in A[n]} B_{\mathcal{S}}^e.
\end{displaymath}
\end{proposition}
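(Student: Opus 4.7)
The plan is to prove the inclusion pointwise: take an arbitrary execution path $e \in A$ and show that for every $n \in \Nset$, $e$ belongs to the union $\bigcup_{e' \in A[n]} B_{\mathcal{S}}^{e'}$. This is the natural way to attack a statement of the form ``$A$ is contained in an intersection'': fix $e \in A$, fix $n$, and exhibit a member of $A[n]$ whose basic cylinder contains $e$.

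The obvious candidate witness is $e[n]$, the truncation of $e$ at length $n$. First I would observe that $e[n] \in A[n]$ by the very definition of $A[n] = \{\, e'[n] \mid e' \in A \,\}$, since $e \in A$. Second, I would appeal to the definition of the basic cylinder set to conclude $e \in B_{\mathcal{S}}^{e[n]}$, because $e[n]$ is, by construction, a prefix of $e$ (and it lies in $\pref(\Exec_{\mathcal{S}})$ since $e \in \Exec_{\mathcal{S}}$). Combining these two observations yields $e \in \bigcup_{e' \in A[n]} B_{\mathcal{S}}^{e'}$, and since $n$ was arbitrary, $e \in \bigcap_{n \in \Nset} \bigcup_{e' \in A[n]} B_{\mathcal{S}}^{e'}$.

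There is no real obstacle here: the statement is essentially unpacking the definitions of $A[n]$ and of the basic cylinder set. The only mild care needed is to make sure the witness $e[n]$ is a legitimate element of $\pref(\Exec_{\mathcal{S}})$ so that $B_{\mathcal{S}}^{e[n]}$ is defined, but this is immediate from $e \in \Exec_{\mathcal{S}}$. The proof should fit in just a few lines.
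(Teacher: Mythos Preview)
Your proposal is correct and essentially identical to the paper's own proof: both take an arbitrary element of $A$, fix $n$, use the truncation $e[n]$ as the witness in $A[n]$, and conclude $e \in B_{\mathcal{S}}^{e[n]}$ because $e[n]$ is a prefix of $e$. Your extra remark that $e[n] \in \pref(\Exec_{\mathcal{S}})$ is a harmless clarification the paper leaves implicit.
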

\begin{proof}
Let $e' \in A$.  It suffices to show that
\begin{equation}
\label{eq:subset-truncation}
e' \in \bigcup_{e \in A[n]} B_{\mathcal{S}}^e
\end{equation}
for all $n \in \Nset$.
Let $n \in \Nset$.  To prove (\ref{eq:subset-truncation}), it suffices
to show that $e' \in B_{\mathcal{S}}^e$ for some $e \in A[n]$.  Since $e' \in A$,
we have that $e'[n] \in A[n]$.  Because $e'[n]$ is a prefix of $e'$
and $e' \in \Exec_{\mathcal{S}}$, we have that 
$e' \in B_{\mathcal{S}}^{e'[n]}$, which concludes our proof.
\qed
\end{proof}

The reverse inclusion does not hold in general.  In some of the proofs
below we use some metric topology.  Those readers unfamiliar with
metric topology are referred to, for example, \cite{S75}.  To prove
the reverse inclusion, we use that the set is closed.

\begin{proposition}
\label{prop:superset-truncation}
For PTS $\mathcal{S}$, let $A \subseteq \Exec_{\mathcal{S}}$.  
If $A$ is closed then
\begin{displaymath}
\bigcap_{n \in \Nset} \bigcup_{e \in A[n]} B_{\mathcal{S}}^e \subseteq A.
\end{displaymath}
\end{proposition}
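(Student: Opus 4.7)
The plan is to show, for every $e'$ in the intersection, that $e'$ is a limit of points in $A$, and then invoke closedness of $A$.

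First I would fix the metric topology on $\Exec_{\mathcal{S}}$: the standard choice is $d(e,e') = 2^{-k}$, where $k$ is the length of the longest common prefix of $e$ and $e'$ (with $d(e,e)=0$). Under this metric, the basic cylinder sets are exactly the open balls centered at their elements, so ``$e' \in B_{\mathcal{S}}^{f}$ with $|f|=n$'' is equivalent to ``$d(e',e'') \leq 2^{-n}$ for every $e'' \in \Exec_{\mathcal{S}}$ that extends $f$.'' In particular, any two paths sharing a prefix of length $n$ are within $2^{-n}$ of each other.

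Next, let $e' \in \bigcap_{n \in \Nset} \bigcup_{e \in A[n]} B_{\mathcal{S}}^e$. For each $n$, pick $e_n \in A[n]$ with $e' \in B_{\mathcal{S}}^{e_n}$; then $e_n$ is a prefix of $e'$ of length $n$. By the definition of $A[n]$, there exists $f_n \in A$ with $f_n[n] = e_n$. Since $f_n$ and $e'$ share a prefix of length $n$, we have $d(f_n,e') \leq 2^{-n}$, so $f_n \to e'$ as $n \to \infty$. Because $A$ is closed and $(f_n)_{n \in \Nset}$ is a sequence in $A$ converging to $e'$, we conclude that $e' \in A$, which is the desired inclusion.

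The main obstacle, or rather the only delicate point, is that the paper never spells out the metric on $\Exec_{\mathcal{S}}$ explicitly; one has to appeal to the standard prefix metric referenced implicitly through \cite{S75}. Everything else is routine: choosing the witnessing $f_n$, bounding the distance by the shared prefix length, and quoting the sequential characterization of closedness in a metric space.
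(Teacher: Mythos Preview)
Your proof is correct and follows essentially the same approach as the paper: pick, for each $n$, an element of $A$ agreeing with $e'$ on a prefix of length $n$, and use the prefix metric plus closedness of $A$ to conclude $e'\in A$. The paper additionally defines the metric inline as $d(e_1,e_2)=\inf\{2^{-n}\mid e_1[n]=e_2[n]\}$ and inserts a case distinction for when some witness $e_n'$ satisfies $e_n'[n]=e_n'$, but since $A\subseteq\Exec_{\mathcal{S}}$ consists of infinite sequences this case is vacuous, so your streamlined version loses nothing.
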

\begin{proof}
Let $e' \in \bigcap_{n \in \Nset} \bigcup_{e \in A[n]} B_{\mathcal{S}}^e$.
Then $e' \in \bigcup_{e \in A[n]} B_{\mathcal{S}}^e$ for all $n \in \Nset$.
Hence, for each $n \in \Nset$ there exists a $e_n \in A[n]$ such that
$e' \in B_{\mathcal{S}}^{e_n}$.  Thus, for each $n \in \Nset$ there exists 
a $e_n' \in A$ such that $e' \in B_{\mathcal{S}}^{e_n'[n]}$ and, hence,
$e_n'[n]$ is a prefix of $e'$.

We distinguish two cases.
Assume that for some $n \in \Nset$, $e_n'[n] = e_n'$.  Then
$e_n'$ is a prefix of $e'$.  Since also $e'$, 
$e_n' \in \Exec_{\mathcal{S}}$, we can conclude that
$e' = e_n'$.  Since $e_n' \in A$ we have that $e' \in A$.

Otherwise, $e_n'[n] \not= e_n'$ for all $n \in \Nset$.
Since also $e_n'[n]$ is a prefix of $e'$, we can conclude that
$e_n'[n] = e'[n]$. Let the distance function
$d : (\pref(\Exec_{\mathcal{S}}) \cup \Exec_{\mathcal{S}}) \times
(\pref(\Exec_{\mathcal{S}}) \cup \Exec_{\mathcal{S}}) \to [0, 1]$
be defined by $d(e_1, e_2) = \inf \{\, 2^{-n} \mid e_1[n] = e_2[n] \,\}$.
Then, $d(e_n', e') \leq 2^{-n}$, that is,
the sequence $(e_n')_n$ converges to $e'$.  Because all the
elements of the sequence $(e_n')_n$ are in $A$ and $A$ is closed,
we can conclude that the limit $e'$ is in $A$ as well (see, for
example, \cite[Proposition~3.7.15 and Lemma~7.2.2]{S75}).
\qed
\end{proof}

PTSs that extend a particular search assign the same measure
to closed sets of execution paths consisting only of explored
transitions.

\begin{proposition}
\label{prop:equal-for-closed-sets}
Let the PTS $\mathcal{S}'$ extend the search $T$ of the PTS $\mathcal{S}$
and let $A \subseteq T^{\omega} \cap \Exec_{\mathcal{S}}$.
If $A$ is closed then $\mu_{\mathcal{S}}(A) = \mu_{\mathcal{S}'}(A)$.
\end{proposition}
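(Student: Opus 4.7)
The plan is to realize $A$ as a decreasing countable intersection of measurable sets whose measures agree under $\mu_{\mathcal{S}}$ and $\mu_{\mathcal{S}'}$, and then pass to the limit by continuity of measure from above. For each $n \in \Nset$, set $U_n^{\mathcal{S}} = \bigcup_{e \in A[n]} B_{\mathcal{S}}^e$ and $U_n^{\mathcal{S}'} = \bigcup_{e \in A[n]} B_{\mathcal{S}'}^e$. Because $A \subseteq T^{\omega}$ and $T$ is finite, $A[n] \subseteq T^n$ is finite, and each $e \in A[n]$ lies in $T^* \cap \pref(\Exec_{\mathcal{S}})$, which by Proposition~\ref{prop:extension-execution}(a) equals $T^* \cap \pref(\Exec_{\mathcal{S}'})$. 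Since all elements of $A[n]$ have the common length $n$, the cylinders $B^e$ for $e \in A[n]$ are pairwise disjoint, so applying Proposition~\ref{prop:extension-measure} termwise yields
\begin{displaymath}
\mu_{\mathcal{S}}(U_n^{\mathcal{S}}) = \sum_{e \in A[n]} \mu_{\mathcal{S}}(B_{\mathcal{S}}^e) = \sum_{e \in A[n]} \mu_{\mathcal{S}'}(B_{\mathcal{S}'}^e) = \mu_{\mathcal{S}'}(U_n^{\mathcal{S}'}).
\end{displaymath}
Moreover $(U_n)$ is decreasing in each ambient space, because for $e \in A[n+1]$ one has $e[n] \in A[n]$ and $B^e \subseteq B^{e[n]}$.

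The next step is to identify $A$ with $\bigcap_n U_n^{\mathcal{S}}$ inside $\Exec_{\mathcal{S}}$ and with $\bigcap_n U_n^{\mathcal{S}'}$ inside $\Exec_{\mathcal{S}'}$. The first identity is immediate from Propositions~\ref{prop:subset-truncation} and~\ref{prop:superset-truncation}, since $A$ is closed in $\Exec_{\mathcal{S}}$ by hypothesis. For the second, I must first verify that $A$ is closed as a subset of $\Exec_{\mathcal{S}'}$. Suppose $(e_k)_k$ is a sequence in $A$ converging to some $e' \in \Exec_{\mathcal{S}'}$ in the metric $d$. Every coordinate of $e'$ eventually matches the corresponding coordinate of $e_k \in T^{\omega}$, so $e' \in T^{\omega}$; Proposition~\ref{prop:extension-execution}(b) then places $e'$ in $\Exec_{\mathcal{S}}$. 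Since $d$ depends only on the sequences themselves, the same convergence $e_k \to e'$ also holds in $\Exec_{\mathcal{S}}$, and closedness of $A$ there forces $e' \in A$. Applying Propositions~\ref{prop:subset-truncation} and~\ref{prop:superset-truncation} once more, this time in $\Exec_{\mathcal{S}'}$, gives $A = \bigcap_n U_n^{\mathcal{S}'}$.

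Continuity of measure from above, applicable because $\mu(U_0) \leq 1$, then yields $\mu_{\mathcal{S}}(A) = \lim_n \mu_{\mathcal{S}}(U_n^{\mathcal{S}})$ and $\mu_{\mathcal{S}'}(A) = \lim_n \mu_{\mathcal{S}'}(U_n^{\mathcal{S}'})$; combined with the termwise equality established in the first paragraph, this delivers $\mu_{\mathcal{S}}(A) = \mu_{\mathcal{S}'}(A)$. I expect the main obstacle to be the transfer of closedness from $\Exec_{\mathcal{S}}$ to $\Exec_{\mathcal{S}'}$, and this is precisely where the hypothesis $A \subseteq T^{\omega}$ together with the finiteness of $T$ is essential: both are needed to ensure that any $\mathcal{S}'$-limit of a sequence in $A$ in fact lives in $\Exec_{\mathcal{S}}$, making the closedness of $A$ there applicable.
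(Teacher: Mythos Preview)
Your proof is correct and follows essentially the same approach as the paper: represent $A$ as the decreasing intersection $\bigcap_n \bigcup_{e \in A[n]} B^e$ via Propositions~\ref{prop:subset-truncation} and~\ref{prop:superset-truncation}, match the cylinder measures termwise (the paper uses the product formula directly, you invoke Proposition~\ref{prop:extension-measure}), and pass to the limit by continuity from above. You are in fact more careful than the paper about the $\mathcal{S}'$-side, explicitly verifying that $A$ is closed in $\Exec_{\mathcal{S}'}$ where the paper simply writes ``by symmetric argument.''
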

\begin{proof}
Obviously, for all $e \in T^*$ and $t \in T$, we have 
$B_{\mathcal{S}}^{e} \supseteq B_{\mathcal{S}}^{et}$.  As a consequence,
$\bigcup_{e \in A[n]} B_{\mathcal{S}}^e \supseteq \bigcup_{e \in A[n+1]} B_{\mathcal{S}}^e$
for all $n \in \Nset$.  Furthermore, 
$\mu_{\mathcal{S}}(\bigcup_{e \in A[0]} B_{\mathcal{S}}^e)
= \mu_{\mathcal{S}}(B_{\mathcal{S}}^{\epsilon}) 
= 1$ and, hence, $\mu_{\mathcal{S}}(\bigcup_{e \in A[0]} B_{\mathcal{S}}^e)$ 
is finite.  Since a measure is continuous (see, for example, 
\cite[Theorem~2.1]{B95}), we can conclude from the above that 
\begin{equation}
\label{eq:equal-for-closed-sets}
\mu_{\mathcal{S}} \left (\bigcap_{n \in \Nset} \bigcup_{e \in A[n]} B_{\mathcal{S}}^e\right )
=
\lim_{n \in \Nset} \mu_{\mathcal{S}} \left ( \bigcup_{e \in A[n]} B_{\mathcal{S}}^e \right ).
\end{equation}
Therefore,
\begin{eqnarray*}
\mu_{\mathcal{S}}(A)
& = & \mu_{\mathcal{S}} \left (\bigcap_{n \in \Nset} \bigcup_{e \in A[n]} B_{\mathcal{S}}^e \right )
\comment{Proposition~\ref{prop:subset-truncation} and \ref{prop:superset-truncation}}\\
& = & \lim_{n \in \Nset} \mu_{\mathcal{S}} \left ( \bigcup_{e \in A[n]} B_{\mathcal{S}}^e \right )
\comment{(\ref{eq:equal-for-closed-sets})}\\
& = & \lim_{n \in \Nset} \sum_{e \in A[n]} \mu_{\mathcal{S}}(B_{\mathcal{S}}^e)
\comment{a measure is countably additive}\\
& = & \lim_{n \in \Nset} \sum_{t_1 \ldots t_n \in A[n]} \prod_{1 \leq i \leq n} \pro_{\mathcal{S}}(t_i)\\
& = & \lim_{n \in \Nset} \sum_{t_1 \ldots t_n \in A[n]} \prod_{1 \leq i \leq n} \pro_{\mathcal{S}'}(t_i)
\comment{$\mathcal{S}'$ extends $T$ of $\mathcal{S}$}\\
& = & \mu_{\mathcal{S}'}(A)
\comment{by symmetric argument}.
\end{eqnarray*}
\qed
\end{proof}

Hence, the PTSs $\mathcal{S}$ and $\mathcal{S}_T$ assign the same
measure to the closed set of those execution paths consisting only 
of explored transitions.

\begin{corollary}
\label{cor:equal-measure}
Let $T$ be a search of the PTS $\mathcal{S}$.  Then
$\mu_{\mathcal{S}}(T^{\omega} \cap \Exec_{\mathcal{S}})
= \mu_{\mathcal{S}_T}(T^{\omega} \cap \Exec_{\mathcal{S}_T})$.
\end{corollary}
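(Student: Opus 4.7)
My plan is to reduce the corollary to a direct application of Proposition~\ref{prop:equal-for-closed-sets}. Set $A = T^{\omega} \cap \Exec_{\mathcal{S}}$. The proposition requires three ingredients: that $\mathcal{S}_T$ extends $T$ of $\mathcal{S}$, that $A \subseteq T^{\omega} \cap \Exec_{\mathcal{S}}$, and that $A$ is closed in the metric $d$ used in the proof of Proposition~\ref{prop:superset-truncation}. The first is already recorded as a proposition in the text; the second holds because $A$ equals that set on the nose.

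The main step, and the only one that needs thought, is closedness. I would argue as follows. Suppose $(e_n)_n$ is a sequence in $A$ converging to $e \in \Exec_{\mathcal{S}}$ with respect to $d$. By definition of $d$, for each $k \in \Nset$ there is some $n_k$ with $e_n[k] = e[k]$ for all $n \geq n_k$. Fix any position $i \geq 1$; choosing $k = i$ we see that the $i$-th transition of $e$ coincides with the $i$-th transition of $e_{n_i}$, which lies in $T$ because $e_{n_i} \in T^{\omega}$. Hence every transition of $e$ is in $T$, so $e \in T^{\omega}$, and together with $e \in \Exec_{\mathcal{S}}$ this gives $e \in A$. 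Thus $A$ is closed.

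Having verified the hypotheses, Proposition~\ref{prop:equal-for-closed-sets} applied with $\mathcal{S}' = \mathcal{S}_T$ yields
\begin{displaymath}
\mu_{\mathcal{S}}(T^{\omega} \cap \Exec_{\mathcal{S}})
= \mu_{\mathcal{S}_T}(T^{\omega} \cap \Exec_{\mathcal{S}}).
\end{displaymath}
To finish, I would invoke Proposition~\ref{prop:extension-execution}(b), which asserts $T^{\omega} \cap \Exec_{\mathcal{S}} = T^{\omega} \cap \Exec_{\mathcal{S}_T}$ since $\mathcal{S}_T$ extends $T$ of $\mathcal{S}$. Substituting on the right-hand side gives the desired equality. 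The only genuine obstacle was the closedness argument, and even that is essentially a pointwise-stabilization argument exploiting the finiteness of prefixes; the rest is plumbing between results already in place.
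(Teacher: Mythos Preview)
Your proof is correct and follows essentially the same route as the paper: verify that $T^{\omega} \cap \Exec_{\mathcal{S}}$ is closed, apply Proposition~\ref{prop:equal-for-closed-sets} with $\mathcal{S}' = \mathcal{S}_T$, and finish via Proposition~\ref{prop:extension-execution}(b). The only cosmetic difference is that the paper obtains closedness by observing that $T^{\omega}$ and $\Exec_{\mathcal{S}}$ are each closed and that intersections of closed sets are closed, whereas you give the sequential argument directly.
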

\begin{proof}
Since the sets $\Exec_{\mathcal{S}}$ and $T^{\omega}$ are closed, their
intersection is also closed (see, for example,
\cite[Proposition~3.7.5]{S75}) and, hence, the
result follows immediately from Proposition~\ref{prop:equal-for-closed-sets} 
and \ref{prop:extension-execution}(b).
\qed
\end{proof}

Now we can show that the measure of the set of execution paths 
consisting only of explored transitions is a lower bound for the
progress measure.

\begin{theorem}
Let $T$ be a search of the PTS $\mathcal{S}$ and let $\phi$ be a
LTL$_+$ formula.  If $T$ has not found a violation of $\phi$ then
\begin{displaymath}
\mu_{\mathcal{S}_T}(T^{\omega} \cap \Exec_{\mathcal{S}_T})
\leq
\prog_{\mathcal{S}}(T, \phi).
\end{displaymath}
\end{theorem}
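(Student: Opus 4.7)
The plan is to combine the characterization of progress from Corollary~\ref{cor:characterization-progress} with the trace-preservation result in Proposition~\ref{prop:simplest-property}, and then invoke monotonicity of $\mu_{\mathcal{S}_T}$. Concretely, the target inequality should reduce to a set inclusion between $T^{\omega} \cap \Exec_{\mathcal{S}_T}$ and the set of $\phi$-satisfying execution paths of $\mathcal{S}_T$.

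First, I would rewrite the right-hand side: since $T$ has not found a violation of $\phi$, Corollary~\ref{cor:characterization-progress} gives
\begin{displaymath}
\prog_{\mathcal{S}}(T,\phi) = \mu_{\mathcal{S}_T}(\{\, e \in \Exec_{\mathcal{S}_T} \mid e \models_{\mathcal{S}_T} \phi \,\}).
\end{displaymath}
Next, I would observe that Proposition~\ref{prop:simplest-property} is precisely tailored to this hypothesis: every $e \in T^{\omega} \cap \Exec_{\mathcal{S}_T}$ satisfies $e \models_{\mathcal{S}_T} \phi$. Therefore
\begin{displaymath}
T^{\omega} \cap \Exec_{\mathcal{S}_T} \subseteq \{\, e \in \Exec_{\mathcal{S}_T} \mid e \models_{\mathcal{S}_T} \phi \,\}.
\end{displaymath}

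Finally, since $\mu_{\mathcal{S}_T}$ is monotone and both sides are measurable (the left by the argument already used in the proof of Corollary~\ref{cor:equal-measure}, the right by standard measurability of LTL-definable sets), the desired inequality follows immediately by applying $\mu_{\mathcal{S}_T}$ to both sides of the inclusion. There is no real obstacle here: the nontrivial content was pushed into the earlier results, in particular Proposition~\ref{prop:simplest-property} and Corollary~\ref{cor:characterization-progress}, so this theorem is essentially a one-line consequence of stringing them together. If anything, the only subtlety worth double-checking is that the hypothesis ``$T$ has not found a violation of $\phi$'' is available to invoke both Proposition~\ref{prop:simplest-property} and Corollary~\ref{cor:characterization-progress} simultaneously, which it is, since it appears verbatim in the theorem's hypothesis.
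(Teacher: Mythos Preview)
Your proposal is correct and matches the paper's own proof essentially line for line: the paper also uses Proposition~\ref{prop:simplest-property} to obtain the inclusion $T^{\omega} \cap \Exec_{\mathcal{S}_T} \subseteq \{\, e \in \Exec_{\mathcal{S}_T} \mid e \models_{\mathcal{S}_T} \phi \,\}$, applies monotonicity of $\mu_{\mathcal{S}_T}$, and then invokes Corollary~\ref{cor:characterization-progress} to identify the right-hand side with $\prog_{\mathcal{S}}(T,\phi)$. The only difference is the order in which you present the two ingredients, which is immaterial.
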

\begin{proof}
\begin{eqnarray*}
\lefteqn{\mu_{\mathcal{S}_T}(T^{\omega} \cap \Exec_{\mathcal{S}_T})}\\
& \leq & \mu_{\mathcal{S}_T}(\{\, e \in \Exec_{\mathcal{S}_T} \mid e
\models_{\mathcal{S}_T} \phi \,\})
\comment{Proposition~\ref{prop:simplest-property}}\\
& = & \prog_{\mathcal{S}}(T, \phi)
\comment{Corollary~\ref{cor:characterization-progress}}
\end{eqnarray*}
\qed
\end{proof}

From the construction of $\mathcal{S}_T$ we can conclude that
$\mu_{\mathcal{S}_T}(T^{\omega} \cap \Exec_{\mathcal{S}_T})$ is the
same as 
\linebreak
$\mu_{\mathcal{S}_T}(\{\, e \in \Exec_{\mathcal{S}_T} \mid e
\mbox{ does not reach } s_{\perp} \,\})$, which is the same as
$1 - \mu_{\mathcal{S}_T}(\{\, e \in \Exec_{\mathcal{S}_T} \mid e
\mbox{ reaches } s_{\perp} \,\})$.  The latter can be computed
in polynomial time using, for example, Gaussian elimination 
(see, for example, \cite[Section~10.1.1]{BK08}).  This algorithm
has been implemented and incorporated into an
extension of the model checker JPF \cite{ZB10:qest}.  While
JPF is model checking sequential Java code which contains
probabilistic choices, our extension also keeps track of the 
underlying PTS.  The amount of memory needed to store this PTS 
is in general only a small fraction of the total amount of memory 
needed.  Once our extension of JPF runs almost out of memory, it
can usually free enough memory so that the progress can be computed 
from the stored PTS.

As was shown in \cite[Theorem~4]{ZB11:icalp}, the above bound
is tight for invariants.

\begin{proposition}
If the search $T$ of the PTS $\mathcal{S}$ has not found a violation 
of invariant $\phi$ then
\begin{displaymath}
\mu_{\mathcal{S}_T}(T^{\omega} \cap \Exec_{\mathcal{S}_T})
=
\prog_{\mathcal{S}}(T, \phi).
\end{displaymath}
\end{proposition}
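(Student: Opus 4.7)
The plan is to exploit the theorem immediately preceding this proposition, which already yields the inequality $\mu_{\mathcal{S}_T}(T^{\omega}\cap\Exec_{\mathcal{S}_T}) \leq \prog_{\mathcal{S}}(T,\phi)$, so that only the reverse direction remains to be shown in the invariant case $\phi = \always a$. I would first invoke Corollary~\ref{cor:characterization-progress}, which is applicable because $T$ has not found a violation of $\phi$, to rewrite $\prog_{\mathcal{S}}(T,\always a)$ as $\mu_{\mathcal{S}_T}(\{\,e\in\Exec_{\mathcal{S}_T}\mid e\models_{\mathcal{S}_T}\always a\,\})$. By monotonicity of the measure it then suffices to establish the set inclusion
$$\{\,e\in\Exec_{\mathcal{S}_T}\mid e\models_{\mathcal{S}_T}\always a\,\} \subseteq T^{\omega}\cap\Exec_{\mathcal{S}_T},$$
after which chaining the two inequalities produces the desired equality.

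For the inclusion itself I would argue by contraposition, leaning on the construction of $\mathcal{S}_T$. If an execution path $e$ of $\mathcal{S}_T$ does not belong to $T^{\omega}$, then $e$ must fire at least one transition outside $T$; and by construction every such transition, whether it is $t_{\perp}$ or some $t_s$, targets the sink state $s_{\perp}$. Since $\lab_{\mathcal{S}_T}(s_{\perp}) = \emptyset$, the atomic proposition $a$ fails at $s_{\perp}$, so $e$ passes through at least one position whose label omits $a$, and hence $e \not\models_{\mathcal{S}_T} \always a$.

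The principal subtlety, and the reason the inequality becomes an equality specifically for invariants, is that the contrapositive argument uses precisely the fact that the empty label $\emptyset$ falsifies the atomic proposition $a$. For a general LTL$_+$ formula this property is no longer guaranteed; an already-fulfilled eventuality, for instance, can continue to hold along $\emptyset^{\omega}$, so paths into $s_{\perp}$ need not be excluded and the progress can strictly exceed $\mu_{\mathcal{S}_T}(T^{\omega}\cap\Exec_{\mathcal{S}_T})$. Recognizing this contrast, rather than any measure-theoretic calculation, is the real content of the proof.
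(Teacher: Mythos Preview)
The paper does not actually prove this proposition; it simply cites \cite[Theorem~4]{ZB11:icalp}. Your argument, by contrast, is a self-contained derivation using only the machinery developed in the present paper, and it is correct. The preceding theorem gives one inequality, Corollary~\ref{cor:characterization-progress} rewrites the progress as $\mu_{\mathcal{S}_T}(\{\,e\in\Exec_{\mathcal{S}_T}\mid e\models_{\mathcal{S}_T}\always a\,\})$, and your contrapositive inclusion argument---any path leaving $T^{\omega}$ must enter $s_{\perp}$, whose empty label falsifies $a$---yields the other inequality. This is exactly the right observation, and your closing remark about why the inclusion fails for general LTL$_+$ formulae is apt. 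One small caveat: you tacitly identify ``invariant'' with $\always a$ for an atomic proposition $a$, which matches the paper's own usage elsewhere; for an invariant $\always\Phi$ with $\Phi$ a positive propositional formula the same argument goes through provided $\emptyset\not\models\Phi$, i.e.\ provided $\Phi$ is not a tautology.
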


In the example below, we present a search of a PTS for a LTL$_+$ formula 
of which the progress is one whereas the bound is zero.  In this case,
the bound does not provide us any information.

\begin{example}
Consider the PTS
\begin{displaymath}
\UseComputerModernTips
\xymatrix{
s_0 \ar[r]_{1} & s_1 \ar@(r,u)[]_{1}}
\end{displaymath}
Assume that the state $s_1$ satisfies the atomic proposition $a$.
Consider the linear-time property $\nxt a$ and the search
$\{ t_{01} \}$.  In this case, we have that
$\prog_{\mathcal{S}}(\{ t_{01} \}, \nxt a) = 1$ but
$\mu_{\mathcal{S}_{\{ t_{01} \}}}(\{ t_{01} \}^{\omega} \cap \Exec_{\mathcal{S}_{\{ t_{01} \}}})
= \mu_{\mathcal{S}_{\{ t_{01} \}}}(\emptyset) = 0$.
\end{example}

\section{Conclusion}

Our work is based on the paper by Zhang and Van Breugel
\cite{ZB11:icalp}.  The work by Pavese, Braberman and
Uchitel \cite{PBU10:quovadis} is also related.  They aim to measure the
probability that a run of the system reaches a state
that has not been visited by the model checker.  Also
the work by Della Penna et al.\ \cite{DIMTZ06:ijsstt} seems related.
They show how, given a Markov chain and an integer $i$,
the probability of reaching a particular state $s$ within
$i$ transitions can be computed.

As we have seen, there seems to be a trade off between efficiency and
accuracy when it comes to computing progress.  Our algorithm to
compute $\prog_{\mathcal{S}}(T, \phi)$ is exponential in the size of 
the LTL$_+$ formula $\phi$ and polynomial in the size of the search $T$.  
We even conjecture (and leave it to future work to prove) that the 
problem of computing progress is PSPACE-hard.
However, in general the size of the LTL formula is small, whereas the 
size of the search is huge.  Hence, we expect our algorithm to be
useful.

Providing a lower bound for the progress measure can be done in
polynomial time.  As we have shown, this bound is tight for
invariants.  Invariants form an important class of properties.
Determining the class of LTL$_+$ formulae for which
the bound is tight is another topic for further research.

The approach to handle the positive fragment of LTL seems not applicable
to all of LTL.  We believe that a different approach is needed
and leave this for future research.

\smallskip

\noindent
\textbf{Acknowledgments}\ We thank the referees for their
constructive feedback.

\bibliographystyle{eptcs}
\bibliography{final}

\end{document}